\documentclass[11pt]{elsarticle}
\usepackage[a4paper,margin=2cm]{geometry}

\usepackage[utf8]{inputenc}
\usepackage{color}
\usepackage{float}
\usepackage{amsmath}
\usepackage{amssymb}
\usepackage{esint}
\usepackage{hyperref}
\usepackage{varioref}
 \usepackage{amsthm}

\newtheorem{theorem}{Theorem}
\newtheorem{corollary}{Corollary}
\newtheorem{lemma}{Lemma}

\newtheorem{remark}{Remark}
\newtheorem{proposition}{Proposition}

\usepackage{algorithm}
\usepackage{algpseudocode}
\usepackage{algorithmicx}
\let\OldStatex\Statex
\renewcommand{\Statex}[1][3]{%
  \setlength\@tempdima{\algorithmicindent}%
  \OldStatex\hskip\dimexpr#1\@tempdima\relax}

\def\fn{\mathbb{F}^n}

\def\ff{\mathbb{F}}

\def\al{\alpha}

\def\be{\beta}

\def\fn{\mathbb{F}^n}

\def\fq{\mathbb{F}_q}
\def\fqn{\mathbb{F}_q^n}

\def\mcb{\mathcal{B}}
\def\beq{\begin{equation}}
\def\eeq{\end{equation}}

\def\Ffq{\mathcal{F}(\mathbb{F}_q)}

\def\Ffn{\mathcal{F}({\mathbb{F}^n})}
\def\Funfq{\mathcal{F}(\mathbb{F}_q)}

\def\ffnz{\mathcal{F}(\mathbb{F}^n)}
\def\Fl{F_{\lambda}}
\def\Ml{M_{\lambda}}
\def\mcR{\mathcal{R}}
\def\hsl{\hat{\psi}_\lambda}
\def\Gl{G_\lambda}
\def\mM{\mathcal{M}}
\def\fb{f_\beta}
\def\be{\beta}

\def\zq{\mathbb{Z}_q}
\def\modq{\ mod\ q-1}
\def\zqm{\mathbb{Z}_{q-1}}
\def\mckf{\mathbf{K}}
\def\mckfl{\mathbf{K}_{\lambda}}
\def\CSf{\Sigma_f}

\journal{Finite Fields and their Applications}

\begin{document}
\begin{frontmatter}
 \title{On Linear Representation, Complexity and Inversion of Maps over Finite Fields}

\author[1]{Ramachandran Ananthraman\corref{cor1}}
\ead{ramachandran.chittur@unamur.be}

\author[2]{Virendra Sule}
\ead{vrs@ee.iitb.ac.in}


\affiliation[1]{organization={Department of Mathematics and Namur Institute for Complex Systems (naXys), University of Namur},
                city={Namur},
                country={Belgium}}

\affiliation[2]{organization={Department of Electrical Engineering, Indian Institute of Technology Bombay},
                city={Mumbai},
                country={India}}


 \cortext[cor1]{Corresponding author}
\begin{abstract}
This paper defines a linear representation for nonlinear maps $F:\mathbb{F}^n\rightarrow\mathbb{F}^n$ where $\mathbb{F}$ is a finite field, in terms of matrices over $\mathbb{F}$. This linear representation of the map $F$ associates a unique number $N$ and a unique matrix $M$ in $\mathbb{F}^{N\times N}$, called the Linear Complexity and the Linear Representation of $F$ respectively, and shows that the compositional powers $F^{(k)}$ are represented by matrix powers $M^k$. It is shown that for a permutation map $F$ with representation $M$, the inverse map has the linear representation $M^{-1}$. This framework of representation is extended to a parameterized family of maps $F_{\lambda}(x): \mathbb{F} \to \mathbb{F}$, defined in terms of a parameter $\lambda \in \mathbb{F}$, leading to the definition of an analogous linear complexity of the map $F_{\lambda}(x)$, and a parameter-dependent matrix representation $M_\lambda$ defined over the univariate polynomial ring $\mathbb{F}[\lambda]$. Such a representation leads to the construction of a parametric inverse of such maps where the condition for invertibility is expressed through the unimodularity of this matrix representation $M_\lambda$. Apart from computing the compositional inverses of permutation polynomials, this linear representation is also used to compute the cycle structures of the permutation map. Lastly, this representation is extended to a representation of the cyclic group generated by a permutation map $F$, and to the group generated by a finite number of permutation maps over $\mathbb{F}$. 

\end{abstract}
\begin{keyword}
Permutation polynomials, Cycle structures, Inverse of permutations, Polynomial maps over finite fields, Koopman operator. 
\end{keyword}

\end{frontmatter}

\section*{Notations}
$\ff$ denotes a finite field and $\ff_q$ denotes the finite field of $q$ elements, where $q = p^m$, for some prime $p$. $\Ffq$ denotes the space of $\fq$-valued functions over $\fq$. $\ff^n$ denotes the Cartesian space of $n$-tuples over $\ff$ and $\Ffn$ denotes the space of $\ff$-valued functions over $\ff^n$. $Id_{\fn}$ denotes the identity map over $\fn$. The coordinate functions $\chi_i \in \Ffn$ are defined as $\chi_i(x) = x_i$ for $x \in \ff^n$ and its $i$-th co-ordinate $x_i \in \ff$. In general, one variable maps are denoted by lower case $f$ and multivariate maps are denoted by upper case $F$.
\section{Introduction}
Given an $n$-dimensional vector space $\ff^n$ over finite field $\ff$, maps $F: \ff^n \to \ff^n$ are ubiquitous in the representation of Finite Automata \cite{Gill2,SeqDyn}, recurrence sequences through Feedback Shift Registers (FSR) \cite{GolombFSR,Goresky}, mathematical models of Stream Ciphers \cite{Rueppel}, state updates of Genetic Networks \cite{HDJ,Kauffman, KO_EK, RThomas} to name a few. In these applications, computations of compositions and inverses of such maps, as well as their representations in polynomials over the finite field $\ff$, play an important role.

This paper defines a linear representation of polynomial maps $F$ over finite fields $\ff$ as matrices $M$ 
over $\ff$ of smallest size $N$. The number $N$ is defined as the \emph{Linear Complexity} of $F$ over $\ff$. This representation is a one-to-one homomorphism of the semigroup generated by $F$ under compositions (group, when $F$ is a permutation) into the semigroup (group) generated by the matrix $M$ under multiplication. This representation also gives a formula for the inverse map $G=F^{-1}$. It facilitates computation of the cycle structure of orbits of the permutation map $F$ in terms of the cycle structure of the linear permutation $M$ over $\ff^N$. This representation is extended for groups generated by multiple (finite) invertible maps $F_i$ over $\ff$ under composition. 

\subsection{Permutation polynomials}
Univariate polynomials $f(x): \ff \to \ff$ that induces a bijection over the field $\ff$ are called \textit{permutation polynomials} (in short, PP) and have been studied extensively in the literature. For instance, given a general polynomial $f(x)$ over $\ff$ deciding whether it is a PP is a well-researched problem in literature \cite{MullenPanario}. Though computational verification of a given polynomial $f(x)$ to be PP is a polynomial time problem in its degree $d$, conditions for any polynomial to be PP is well understood only for certain polynomials with specific structures such as monomials, linearized polynomials, and Dickson polynomials, to name a few. 

Since a permutation polynomial $f(x)$ induces a permutation over the field $\ff$, for every $f(x)$ one can associate another permutation polynomial $g(x)$, defined as the \emph{compositional inverse} of $f$ such that 
\[
(f \circ g) (x) = (g \circ f) (x) = Id_{\ff}.
\]
Computing the compositional inverse of a PP is a hard problem \cite{MullenPanario}. Some work \cite{Cafureetal, CH_2002, TuxanidyWang, ZhengWangWei} study on the compositional inverse, but are mostly restricted and applicable to PPs with special structures.

Over a finite field $\fq$, every polynomial $f(x)$ can be associated with an element $\bar{f}(x)$ of the ring $R[x]$, where $R[x]$ is the polynomial ring with polynomials upto degree $q$. The ring addition is polynomial addition and ring multiplication is polynomial multiplication defined through modulo $x^q-x$, the field polynomial. In general, 
\[
f(x) \sim \bar{f}(x) = f(x)\ \mbox{mod} \ (x^q-x)
\]
and $\bar{f}(x)$ is the \emph{polynomial function} corresponding to the polynomial $f(x)$. 
This ring $R[x]$ is the set of all possible $\fq$-valued functions over $\fq$. The total number of such polynomial functions are $q^{q}$, and they form a vector space of dimension $q$. It can be easily verified that $f(x)$ is a permutation polynomial if and only if its polynomial function $\bar{f}$ is a permutation polynomial. In this manuscript, without loss of generality, we assume the polynomial $f$ over $\fq$ to be of degree less than or equal to $q$ and any polynomial $f$ of higher degree is converted to its equivalent polynomial $\bar{f}$.

\subsubsection{Cycle structures of a permutation polynomial}
Given a permutation polynomial $f(x)$ over the field $\fq$, each point $\alpha \in \fq$ is on a unique orbit (or cycle) under the action of the permutation polynomial and the orbit (or cycle) length of the orbit containing $\alpha$ is the smallest $l \in \mathbb{Z}_+$ such that 
\[
p^{(l)} (\alpha) = \alpha
\]
where 
\[
p^{(l)}  := \underbrace{p \circ p \circ \dots \circ p(x)}_{l\ \text{times}}.
\]
Any permutation polynomial $f(x)$ decomposes the finite field $\fq$ into sets containing mutually exclusive orbits, with the cardinality of each set being equal to the cycle length of the elements in that set. The \emph{cycle structure}\footnote{The cycle structure of a permutation $\CSf$ in general is the set containing information about cycle (or orbit) lengths along with their multiplicities. But in this work, we use this term to only denote the orbit lengths of distinct cycles without considering their multiplicity.} $\CSf$ of a permutation polynomial $f(x)$ is the set of all cycle lengths of the permutation. Computing the cycle structure of permutations represented by PPs is an important problem encountered in cryptography, coding theory, and communication systems \cite{YLC, JL_JB} with no known efficient algorithm for a general class of PPs. Computing the cycle structure of specific forms of PP is a well-studied problem in the theory of finite fields \cite{Ayca,LidlMullen_Dickson,Mullen2016}.

\subsubsection{Parametric family of permutation polynomials}
There has been extensive study about a family of polynomial maps defined through a parameter $a \in \ff$ over finite fields. Some well-studied families of polynomials include the Dickson polynomials and reverse Dickson polynomials, to name a few. Conditions for such families of maps to define a permutation of the field $\ff$ are well studied and established for special classes like Dickson polynomials \cite{LidlBook}, linearized polynomials \cite{BW_ZL} and few other specific forms \cite{TuxanidyWang,ZhengWangWei} to name a few.

In this work, we propose a linear algebraic framework focused on maps over (vector spaces over) finite field with the following objectives 
\begin{itemize}
    \item Given a polynomial function $f(x)$ over a finite field $\ff$ (or $\ff^n $), determine if it is a permutation over $\ff$ ($\ff^n$), and if it is, compute its compositional inverse. 
    \item For a general class of permutations over finite fields $\ff$ (or $\ff^n$), compute the possible cycle lengths. 
    \item Given an $1$-parameter family of maps over $\ff$, determine if it is parametrically invertible over $\ff$. It is also shown in this paper that the compositional inverse of a $1$-parameter family of permutation polynomials is also a $1$-parameter family of permutation polynomials over the same parameter and an explicit construction of the same is given.
\end{itemize}

\subsection{Linear complexity of sequences over finite field} 
The linear complexity of sequences over finite fields has been well studied in cryptography and used to define a measure of pseudorandomness of a sequence \cite{GolombFSR, Rueppel}. Given a sequence $S = \{s_n\}$, the linear complexity is the degree $m$ of the minimal degree polynomial $p(X)$,
\beq
\label{eq:MPseq}
p(X) = \al_0 + \al_1 X + \dots + \al_{m-1}X^{m-1} +  X^{m}
\eeq
which defines the recurrence 
\[
s_{k+m} + \al_{m-1} s_{k+m-1} + \dots + \al_1 s_{k+1} + \al_0 s_{k} = 0
\]
for all $k \geq 0$. This polynomial is the \emph{minimal polynomial} of the sequence $S$. The well-known algorithm by Berlekamp-Massey \cite{Berlekamp, Massey} computes the linear complexity of such sequences over finite fields. This paper extends this notion of linear complexity from that for sequences over finite fields to maps defined over finite fields. In particular, given a map $F: \ff^n \to \ff^n$, we will find a polynomial $p(X)$ as in equation (\ref{eq:MPseq}) with coefficients $\al_i \in \ff$ such that 
\[
F^{(k+N)}(x) + \al_{N-1} F^{(k+N-1)}(x) + \dots + \al_1 F^{(k+1)}(x) + \al_0 F^{(k)}(x) = 0 
\]
In essence, this notion of linear complexity of a function can be used as a characterization for the computational effort involved for computations on $F(x)$ such as computing the cycle structures of the map, computing its compositional inverse etc.,   

\subsection{Connections to dynamical systems over finite field}
\label{ss:DSFF}
An interesting connection is drawn from dynamical systems theory for studying linear complexity and representation of maps over $\ff^n$. Since we are interested in understanding the action of the polynomial map $F(x)$ under its compositions, it is natural to define a discrete-time dynamical system evolving over $\ff^n$ in the following way.
\beq
\label{eq:DS}
x(k+1) = F(x(k))
\eeq
where $k \in \mathbb{Z}_+$ is the time index, $x(k) \in \ff^n$ is the \emph{state} of the dynamics and $F$ is the \emph{state transition map}. Such a representation is known as \emph{state space} representation in the parlance of dynamical systems theory. When the map $F$ is linear, the dynamics is defined to be linear dynamics. Starting from an initial condition $x(0) = x_0$, its solution under the dynamics (\ref{eq:DS}) is the sequence
\[
x_0, F(x_0), F^{(2)}(x_0), \dots, F^{(k)}(x_0), \dots 
\]
Given the finiteness of the set $\ff^n$, the above sequence can not have infinite terms, and for every initial condition $x_0$, there exists a $L \geq 0$, $M \geq 1$ such that
\[
F^{(k+M)}(x_0) = F^{(k)}(x_0) \quad \quad \forall k \geq L.
\]
$L$ and $M$ are defined as the chain length and the orbit length of the solution starting from $x_0$. It can be seen that when the map $F$ is a permutation of the field $\ff^n$, then $L = 0$ for all initial conditions $x_0$ \cite{Gill2}. The cycle set of the dynamical system (\ref{eq:DS}) is the set of all possible orbit lengths under its dynamics. 

On the study of dynamics over a finite field, the works \cite{Gill1,Harrison} provides a comprehensive literature on the computation of solution structures of a linear dynamical system over a finite field\footnote{Such systems are also known as linear modular systems or  linear sequential machines or linear sequential circuits in the literature.}. For non-singular linear dynamics, given as 
\begin{align}
\label{eq:LDFSS}
x(k+1) = Ax(k),
\end{align}
where $x \in \ff^n$ is the state and $A \in \ff^{n\times n}$ is the state transition map represented as a matrix over $\ff$ and its \emph{cycle set} $\Sigma$ and are computed through elementary divisors of the matrix $A$. 

When the dynamics is non-linear, the computation of the cycle set is a computationally hard problem. Apart from brute force computations, the work \cite{RamSule} gives an algorithmic procedure to estimate the cycle set of a non-linear dynamical system over finite fields by using the Koopman operator and constructing a reduced Koopman operator by restricting the Koopman operator to the smallest invariant subspace containing the coordinate functions and the cycle set of the original non-linear dynamical system is computed through the cycle set of the linear dynamics defined through the reduced Koopman operator, which is well understood (and computable through linear algebraic computations). 

\subsubsection{The Koopman operator framework for dynamical systems}
Given a dynamical system as in (\ref{eq:DS}), define $\ffnz$ to the vector space of all functions from $\ff^n \to \ff$. The Koopman operator $\mckf$ \cite{Koopman} corresponding to (\ref{eq:DS}) is defined over $\ffnz \to \ffnz$ in the following way. Given any $\phi(x) \in \ffnz$,
\[
\mckf \phi(x) = \phi \circ f(x) = \phi(f(x)).
\]
Irrespective of whether the dynamics (\ref{eq:DS}) being linear or not,  the Koopman operator $\mckf$ is a linear operator over the function space $\ffnz$. This linearity of the Koopman operator can be exploited to construct a linear dynamics over $\ffnz$ corresponding to the (\ref{eq:DS}) in the following way
\beq
\label{eq:KLS}
\phi_{k+1}(x) = \mckf \phi_k(x) \quad \quad \phi_0(x) = \phi(x),
\eeq
where each $\phi_k(x) \in \ffnz$. 

Initially, the Koopman operator framework was used extensively for dynamics over reals (or complex) state space, and the function space is infinite-dimensional, which leads to resorting to finite-dimensional numerical approximations of the Koopman operator \cite{Mwilliams, Mauroy} for practical computations. In our setting of dynamical systems over a finite field $\fqn$, the space of functions $\ffnz$ is a finite-dimensional vector space, with dimension $q^n$ and cardinality $q^{q^n}$. The work \cite{RamSule} studies the connection between the dynamics of the non-linear system (\ref{eq:DS}) and the linear system (\ref{eq:KLS}) defined through the Koopman operator. 

\subsection{Representation of a finite group of permutations}
Given a group $G$ of permutations over a finite set, the (group) representation represents the group action in terms of invertible matrices over a finite-dimensional vector space, and the group operation is replaced by matrix multiplication. Such representations are imperative in studying abstract groups as it helps reformulate the problems over the group to equivalent linear algebraic problems. 

A finite field, by definition, is a finite set, and the set of all permutation polynomials over the finite field forms a group under composition. Given a finite subset of such permutations, we can compute a group generated by this set. In this paper, we propose a representation of such a group using the concept of linear representation defined through the Koopman operator. 

\subsection{Outline of the paper}

The paper is organized as follows. Section \ref{sec:LR-ov} focuses on linear representation for maps over finite fields $\ff$, develops conditions for invertibility, computes the compositional inverse of such maps and estimates the cycle structure of permutation polynomials. In Section \ref{sec:LR-ParPP}, this linear representation is extended to a family of parametric maps, studying its invertibility and computation of the parametric inverse. 
The extension of the theory of linear representation to multivariate maps (maps over $\ff^n$) is discussed in Section \ref{sec:LR-mv} and finally, a linear representation of the group generated by a finite set of invertible maps over $\ff^n$ is addressed in Section \ref{sec:LR-FG}.

\section{Linear Representation of a map over the finite field $\fq$}
\label{sec:LR-ov}
Given a map $f:\fq\rightarrow\fq$ and a function $g(x) \in \Funfq$, the iterates of the map $g(x)$ under the Koopman operator $\mckf$ generates the following sequence over $\Funfq$
\begin{equation}
\label{eq:fg-sequence}
g(x), \mckf(g(x)), (\mckf)^2(g(x)), \dots 
\end{equation}
which is the same as 
\[
g(x), g(f(x)),  g(f^{(2)}(x)), \dots .
\]
The initial part of the paper \cite{RamSule} discusses the relationships between the map $f$ and $\mckf$. Since the space $\Funfq$ is of finite dimension (and equal to $q$), there are only finite linearly independent functions in the sequence (\ref{eq:fg-sequence}). This sequence is the \emph{cyclic evolution} of the function $g(x)$ under the linear operator $\mckf$, and the subspace is denoted as $S(\mckf,g)$. When $g(x)$ is the coordinate function (for maps over $\fq$, it is also the identity function) $\chi$ (i.e., $\chi(x) = x$), the cyclic invariant subspace $S(\mckf,\chi)$ is generated by $\chi$ by the action of $(\mckf)^i$, $i=1,2,\dots$ and there exists a smallest $N \in \mathbb{Z}_+$ such that $(\mckf)^{N}(\chi)$ is linearly dependent on the previous functions $(\mckf)^{i}(\chi)$ and $\dim S(f,\chi)=N$. A natural basis for this invariant subspace is 
\begin{align}
\label{eq:basis}
\mcb = \{\chi, \mckf(\chi), (\mckf)^{2}(\chi), \dots, (\mckf)^{N-1}(\chi) \}.
\end{align}
Since $(\mckf)^{N}(\chi)$ is linearly dependent on the previous iterates, we can compute $\alpha_i \in \ff$ such that
\beq\label{alphacoeff}
(\mckf)^{N}(\chi) = \sum_{i = 0}^{N-1} \al_i (\mckf)^{i}(\chi). 
\eeq

Denote by $K_f=\mckf|{S(\mckf,\chi)}$, the restriction of $\mckf$ to this subspace $S$. We call $K_f$, the reduced Koopman operator of $f$ over this invariant subspace $S$. Let $M$ denote the matrix representation of $K_f$ in the basis $\mcb$, be called as the matrix of the reduced Koopman operator. To be consistent with the notation, the matrix $M$ and $K_f$ have the following relationship relative to the basis elements $\mcb$. Denote by $\psi_i=(\mckf)^{(i-1)}(\chi)$ the $i$-th basis function and let $\hat{\psi}$ denote the $N$-tuple column
\[
\hat{\psi}=[\psi_1,\psi_2,\ldots,\psi_N]^T.
\]
Then the matrix representation $M$ of $K_f$ is defined by the relation
\beq\label{MatrixRepNotation}
(K_f)\hat{\psi}=M\hat{\psi}
\eeq
where, the left-hand action of $K_f$ is defined by
\[
(K_f)\hat{\psi}\triangleq [(K_f)\psi_1,\ldots,(K_f)\psi_N]^T.
\]
Hence $M$ is an $N\times N$ companion matrix over $\fq$ in the ordered basis $\psi_i=(\mckf)^{(i-1)}(\chi)$ denoted as
\beq\label{Mmatrix}
M = \begin{bmatrix} 0 & 1 & 0 & \dots & 0 & 0 \\
0 & 0 & 1 & \dots & 0 & 0 \\ 
\vdots & \vdots & \vdots & \ddots & \vdots & \vdots \\
0 & 0 & 0 & \ddots & 0 & 1\\
\al_0 & \al_1 & \al_2 & \dots & \al_{(N-2)} & \al_{(N-1)} \end{bmatrix}
\eeq
where $\alpha_i$ are defined as in equation (\ref{alphacoeff}). This reduced Koopman operator $K_f$ was used to develop a piece of computational machinery to analyze the dynamic evolution of the original map $f(x)$ in the work \cite{RamSule}. In this paper, we show that $M$ inherits important properties of $f$ and define the notion of \emph{linear representation} of the map $f$ using the reduced Koopman operator.  

\subsection{Representation of $f$ in terms of the basis $\mcb$}
We now construct a representation of $f$ itself using the basis $\mcb$ of the invariant subspace $S$. By construction, the space $S$ is the smallest $\mckf$-invariant subspace containing $\chi$ and $K_f$ denotes the action of $\mckf$ on $S$. The coordinate function $\chi$, which belongs to $S$, and has the unique representation in the basis as
\[
\chi=\langle e_1,\hat{\psi}(x)\rangle=e_1^T\hat{\psi}
\]
where, $\langle u,v\rangle=u^Tv$ denotes the Cartesian dot product of $N$-tuples. Then the representation of $f$ itself in the basis $\mcb$ is
\beq\label{reprofonevarf}
f(x)=(\mckf)(\chi)(x)=\langle e_1,M\hat{\psi}\rangle=\langle M^Te_1,\hat{\psi}\rangle=\psi_2.
\eeq

\subsection{Representation of $f^{-1}$}
We now show that whenever $f$ is a permutation function in $\fq$, the inverse function can be represented similarly over the same space $S$. First, we prove a condition of invertibility of $f$ in terms of the representation matrix $M$ in (\ref{reprofonevarf}).

\begin{lemma}
\label{lem:NS_singlevar}
For a map $f: \fq \to \fq$, the following are equivalent
\begin{enumerate}
    \item $f$ is a permutation over $\fq$
    \item $K_f$ is non-singular
    \item $M$ is a non-singular matrix
    \item $\al_0 \neq 0$
\end{enumerate}
\end{lemma}

\begin{proof}
$1 \implies 2$: Let $f$ be a permutation over $\fq$. Then $\mckf$ is non-singular on $\Funfq$ (from Lemma 2 in \cite{RamSule}) hence, the restriction $K_f=\mckf|S$ is also non-singular on $S$. 

$2 \implies 1$: Conversely, let $f$ be not a permutation function. Then there exist $x_1\neq x_2$ in $\fq$, such that $f(x_1)=f(x_2)$. The space $S$ in (\ref{eq:basis}) is generated by $\{\chi,\chi\circ f,\chi\circ f^{(2)},\ldots\}$. Hence the co-ordinate function $\chi$ belongs to $S$. On the other hand
\begin{align*}
f(x_1)=f(x_2) \Rightarrow &\ \mckf \phi(x_1)=\mckf \phi (x_2), \, \forall\, \phi\in\, S\\
  \Rightarrow &\ \psi(x_1) = \psi(x_2) \, \forall\, \psi\in\, K_f(S)\\
  \Rightarrow &\ \chi\, \notin\, K_f(S) \\
   \Rightarrow &\  K_f\, \mbox{ is not one-to-one on }\, S.
\end{align*}
The statements $3,4$ on matrix representation $M$ of $K_f$ and their equivalence with the statement $2$ follows easily. 
 \end{proof}

When $f$ is invertible and $\dim S(\mckf,\chi)=N$, the inverse function $g=f^{-1}$ is represented as follows. Since $M$ is a companion matrix as shown in (\ref{Mmatrix}), its inverse is of the form
\beq\label{Minvmatrix}
M^{-1} = 
\begin{bmatrix} 
c_0 & c_1 & \ldots & c_{(N-2)} & c_{(N-1)} \\
1 & 0 & \dots & 0 & 0\\ 
\vdots & \vdots & \ddots & \vdots & \vdots\\
0 & 0 & \ldots & 1 & 0
\end{bmatrix}. 
\eeq
It can be noted that in the above equation $c_i=-\al_{(i+1)}/\al_0$ for $i=0,\ldots,(N-2)$ and $c_{(N-1)}=1/\al_0$.  
\begin{theorem}
Given a permutation function $f(x)$, and its linear representation (\ref{reprofonevarf}) with the matrix $M$ and its inverse represented in companion form as in (\ref{Mmatrix}) and (\ref{Minvmatrix}) respectively, the compositional inverse $g(x)$ of the function $f(x)$ is given by 
\beq\label{inverforminonevar}
g(x)=\sum_{i=0}^{N-1}c_i(f^{(i)})(x)
\eeq
 Further, the function $g(x)$ has a linear representation over the same basis $\mcb$ given by
\beq\label{repoffinverse}
g(x)=\langle e_1,M^{-1}\hat{\psi}\rangle.
\eeq
\end{theorem}

\begin{proof}
Given the matrices $M$ and $M^{-1}$ as in equations (\ref{Mmatrix}) and (\ref{Minvmatrix}), and from the fact that $MM^{-1} = I$, we have the following relationship from the last row of $MM^{-1}$
\[
\begin{array}{ccl}
\al_0c_0+\al_1 & = & 0\\
\al_0c_1+\al_2 & = & 0\\
 \vdots & \vdots & \vdots\\
\al_0c_{(N-2)}+\al_{(N-1)} & = & 0\\
\al_0c_{(N-1)} & = & 1
\end{array}.
\]
The inverse formula $g(x)$ claimed is equivalent to
\[
g(x)= (\mckf)_g (\chi)(x)=\sum_{i=0}^{N-1}c_i(\mckf)^{i}(\chi)(x).
\]
where $\mckf_g$ is the Koopman operator associated with the function $g(x)$. Hence the formula is verified as
\[
\begin{array}{rcl}
\mckf\ g(x) & = & \sum_{i=0}^{N-1}c_i(\mckf)^{i+1}(\chi)(x)\\
 & = & \sum_{i=0}^{N-2}c_i(\mckf)^{i+1}+c_{N-1}\sum_{i=0}^{N-1}\al_i(\mckf)^i(\chi)(x)\\
 & = & \chi(x)=x
 \end{array}.
\]
The last step follows from the relations between $c_i$ and $\al_i$ above. The representation of $g$ in (\ref{repoffinverse}) is also thus verified.
\end{proof}

\begin{remark}
The above representation of the map $f$ and its inverse is in terms of the linear maps on the smallest $\mckf$-invariant subspace $W$. The formulae for $f$ and $f^{-1}$ are expressed in terms of a chosen basis of $W$. To the best of the authors' knowledge, no other generic formula is available in the literature for expressing $f^{-1}$ directly. It is because of this formula it follows that $f^{-1}$ can be computed in polynomial time if the dimension of $W$ is of polynomial order $O(n^k)$ where $f$ is defined over $\ff_{p^n}$.   
\end{remark}

\subsubsection{Illustrative example}

Consider the map $f(x) = x^3+2x^2+3x+3$ over $\ff_5$. The iterates of $\chi$ are given by functions, 
\[
\chi(x)=x, (\mckf)(\chi)(x)=x^3+2x^2+3x+3,(\mckf)^2(\chi)(x)=2x^3+3x^2+4x+2
\]
while
\[
(\mckf)^3(\chi)(x)=4x+3(\mckf)(\chi)(x)+3(\mckf)^2(\chi)(x).
\]
The basis functions of the cyclic subspace are 
\[
\mcb = \{x, x^3+2x^2+3x+3, 2x^3+3x^2+4x+2 \},
\]
and the matrix representation of the Koopman operator restricted to this subspace with the basis $\mcb$ and its inverse are computed as 
\[
K = \begin{bmatrix} 0 & 1 & 0 \\ 0 & 0 & 1 \\ 4 & 3 & 3  \end{bmatrix} \quad \quad
K^{-1} = \begin{bmatrix} 3 & 3 & 4 \\ 1 & 0 & 0 \\ 0 & 1 & 0 \end{bmatrix}.
\]
The representation of $g(x) := f^{-1}$ in terms of the basis functions $\mcb$ is $[3,3,4]^T$ and
\begin{align*}
g(x) &= 3(x) + 3(x^3+2x^2+3x+3) + 4(2x^3+3x^2+4x+2) \\
&= x^3+3x^2+3x+2
\end{align*}
It can be verified that 
\[
f(g(x)) = g(f(x)) \equiv x
\]
as functions and hence $g(x)$ is the compositional inverse of $f(x)$

\subsection{Linear Representation of Monomials}
In this section, we focus on additional results on the linear representation of $f$ when $f$ is a monomial function. The following theorem re-establishes the condition invertibility of a monomial while adding additional results on the linear complexity. 
\begin{theorem}
Given a monomial $f(x) = x^n$
\begin{enumerate}
\item $f$ is a permutation if and only if $gcd(n,q-1) = 1$.

\item Further when $f$ is a permutation, the linear complexity of $f$ is the multiplicative order of $n$ in $\zq$.
\item When $gcd(n,q-1) \neq 1$ and $n$ belongs to the  the nilradical of $\zqm$, then the linear complexity of $f$ is the index of nilpotence of $n$.
\end{enumerate}
\end{theorem}
\begin{proof}
To prove the results, we first observe that the iterations of $\chi(x) = x$ under $\mckf$ for $f(x) = x^n$ contain only monomials. 
\[
\mckf (\chi) = \mckf (x) = (x)^n
\]
and also
\[
\mckf (x^m) = (x^n)^m = x^{nm\ \modq} \ \ \ \forall\ \ m
\]
Hence the cyclic subspace $S(f,\chi)$ is generated by monomials as follows
\begin{align}
x,x^{n \modq},x^{n^2 \modq},\dots, x^{n^k \modq}.
\label{eq:dummy0}
\end{align} 
So to get a linear dependence in the sequence (\ref{eq:dummy0}), there should be a repetition of a monomial in the sequence. This is because $x^{i}$ and $x^{j}$ are linearly independent as functions whenever $i \modq \neq j \modq$. At some $N$, $x^{n^N \modq} = x^{n^i \modq}$ for some $i < N$ and the $N$ at which a repetition of monomials occurs is the linear complexity of the function $f(x)$.

To prove the first part, let $N$ be the linear complexity of $f$.
\begin{align}
\label{eq:dummy4}
(\mckf)^N \chi = \sum_{i=0}^{N-1} \al_i (\mckf)^i \chi
\end{align}
Since $f$ is a monomial, each of $(\mckf)^k \chi$ is also a monomial. So, only one of the coefficients $\al_i$ in (\ref{eq:dummy4}) is non-zero. However, for invertibility of $f$, from Lemma \ref{lem:NS_singlevar}, it is necessary and sufficient that $\al_0 \neq 0$. Hence 
\begin{align*}
    x^n\ \mbox{is a permutation}\ &\iff \al_0 \neq 0 \\
    & \iff \exists\ N\ \mbox{such that}\ x^{n^N \modq} = x \\
    &\iff n^N \modq = 1 \modq \\
    &\iff gcd(n,q-1) = 1
\end{align*}
which completes the proof of the first statement. 

To prove the second statement, we assume $gcd(n,q-1) = 1$, hence $n$ is a unit of $\zqm$ and let $l$ be the multiplicative order of $n$ in $\zqm^*$.
\begin{align}
\begin{aligned}
& n^l \modq = 1 \\
\implies & x^{n^l \modq} = x \\
\implies & (\mckf)^l \chi = \chi
\end{aligned}
\label{eq:dummy1}
\end{align}
Since there is a linear dependence after $l$ compositions of $f$, the linear complexity of $f$ is $\leq l$. Next, we claim that the linear complexity is exactly $l$. Assuming the contrary, if there exists $0 \leq l_1 < l_2 < l$ such that 
\begin{align*}
&x^{n^{l_1} \modq} = x^{n^{l_2} \modq} \\
\mbox{then},\ & n^{l_1} \modq = n^{l_2} \modq
\end{align*}
Since $n$ and $q-1$ are coprime, it follows that
\begin{align}
&n^{l_2} - n^{l_1} \modq = 0 \nonumber\\
\implies & n^{l_1}(n^{l_2-l_1}-1) \modq = 0 \nonumber \\
\implies & n^{l_2-l_1}-1 \modq = 0  \label{eq:dummy2}
\end{align}
and finally
\[
n^{l_2-l_1} \modq = 1
\]
which is a contradiction to the assumption that $l$ is the multiplicative order of $n$ in $\zqm$ as $l_2 - l_1 < l$ by construction. Hence there cannot be any repetition of any monomial before $x$ in the sequence (\ref{eq:dummy0}), which makes the linear complexity exactly $l$. 



To prove the last part, let $m$ be the index of nilpotence of $n$ in the ring $\zqm$ when $gcd(n,q-1) \neq 1$. 
\begin{align*}
    n^m \modq &= 0  \\ 
    \implies x^{n^m \modq} &= x^0 = \left\{ \begin{matrix} 0 & x = 0 \\ 1 & x \neq 0\end{matrix} \right. \\
\end{align*}
This means that $(\mckf)^{m}(x) = x^{q-1}$ and any further operation of $\mckf$ on $(\mckf)^m$ will remain as $x^{q-1}$ since 
\[
(\mckf)^{m+k}(x)  = ((x^n)^m)^k = (x^{q-1})^k = x^{q-1} \ \ \forall \ \ k \geq 0
\]
Hence any function $x^n$ with $gcd(n,q-1) \neq 1$, under the action of $\mckf$ settles down to the function $x^{q-1}$. Further $m$ is the least such integer such that $n^{m} \modq = 0$ as any smaller $m_1$ such that $x^{n^{m_1}} = x^{q-1}$ is a contradiction to the assumption that $m$ is the index of nilpotence of $n$ in the nilradical of $\zqm$
\end{proof}

\subsection{Cycle structures of a permutation polynomial}
\label{sec:CycStr}
 In this section, we aim to compute the possible cycle lengths of the PP through the linear representation defined in (\ref{reprofonevarf}). As discussed in Section \ref{ss:DSFF}, given a polynomial $f(x)$, we associate a dynamical system through a difference equation of the form 
\beq
\label{eq:DynSys}
x(k+1) = f(x(k)) \ \ \ \ \ x(0) = x_0 \in \fq.
\eeq

Given a permutation polynomial $f(x): \fq \to \fq$, associate a linear dynamics over $\fq^N$ through the reduced Koopman operator in the following way
\beq\label{eq:LDynPP}
y(k+1) = M y(k),
\eeq
where $M$ is defined as in (\ref{Mmatrix}) and $y \in \fq^N$, $N$ being the dimension of the invariant subspace $S(\mckf,\chi)$. Let $\Sigma_M$ be the cycle set of the linear system. The following result characterizes the cycle structures of the permutation polynomial. 
\begin{theorem}
\label{thm:CyclePer}
Given a permutation polynomial $f(x)$ over $\ff$ with the cycle structure $\CSf$  and its linear representation as in (\ref{reprofonevarf}) with matrix $M$. Define linear dynamics over $\ff^N$ as in (\ref{eq:LDynPP}) and let $\Sigma_M$ be the cycle set corresponding to the linear dynamics. Then 
\begin{enumerate}
    \item $\Sigma_f \subset \Sigma_M$.  
    \item Additionally, given $N_i \in \Sigma_M$, it is an element of $\Sigma_f$ if and only if the following system of linear equations 
    \[
    (M^{N_i} - I)y = 0,
    \]
    has a solution $y$ such that 
    \begin{align}
    \label{eq:tempTheorem}
    y = [\psi_1(\alpha),\dots,\psi_N(\alpha)]^T,
    \end{align}
    for some $\alpha \in \ff$. 
    \item For a specific $\alpha \in \ff$, its cycle length under $f$ is equal to the cycle length of the vector $y_\alpha = [\psi_1(\alpha), \dots, \psi_N(\alpha)]^T$ under $M$.
\end{enumerate} 
\end{theorem}

This theorem is a reformulation of the result of Theorem 2 of \cite{RamSule} with the non-linear dynamics be defined over $\ff$ rather than $\ff^n$. We would refer to that manuscript for proof. Further, we point out the following remarks about the theorem. 
\begin{enumerate}
\item The first statement of Theorem \ref{thm:CyclePer} does not imply an equivalence between the cycle structure of the permutation polynomial and the cycle set of the linear dynamics (\ref{eq:LDynPP}), and the former is a subset of the latter. This is because the linear dynamics evolve over a larger set $\ff^N$ than the original PP (defined over $\ff$) and justifies the additional orbits of the linear dynamics. 

\item The second statement of the theorem gives a necessary and sufficient condition for an element of the set $\Sigma_M$ to be in $\Sigma_f$. If the choice of basis is as in (\ref{eq:basis}), once the set of all $y$ satisfying (\ref{eq:tempTheorem}) is obtained, the first component of $y$ is precisely the $\alpha$ for which the other components are to be verified for consistency. 
\end{enumerate}

We look at some additional relations between the sets $\Sigma_f$ and $\Sigma_M$. For a permutation polynomial $f(x)$, define $L_f$ to be the least common multiple (LCM) of all its cycle lengths 
\[
L_f = \mbox{lcm}\ (i\ |\ i \in \Sigma_f).
\]
It can be seen that $L_f$ is the least positive integer such that 
\[ 
f^{(L_f)}(\alpha) = \alpha,\ \ \forall \alpha \in \fq.
\]
Similarly, define $L_M$ to be the LCM of all the cycle lengths of the dynamics defined in (\ref{eq:LDynPP}). From Lemma 4 of \cite{RamSule}, we have the following. 
\begin{theorem}
Given the cycle structure, $\Sigma_f$ of the permutation polynomial and the cycle set $\Sigma_M$ of the associated linear dynamics. Let $L_f$ and $L_M$ be defined as the LCM of all their respective cycle lengths. Then
\[
L_f = L_M.
\]
\end{theorem}

\subsubsection{Numerical Examples}
In this section, we provide examples of estimating the possible orbit lengths of permutation polynomials in the form of Dickson polynomials $D_n(x,\al)$ \cite{MullenPanario} of degree $n$ through the linear representation approach. The Dickson polynomial $D_n(x,\al)$ is of the form 
\[
D_n(x,\al) = \sum_{i = 0}^{\lfloor{\frac{n}{2}}\rfloor} \frac{n}{n-i}\binom{n-i}{i} (-\al)^{i} x^{n-2i}.
\]
\begin{table}[H]
    \centering
\begin{tabular}{|c|c|c|c|c|}
\hline
    \mbox{Permutation polynomial} & \mbox{Finite Field} & \mbox{Dimension of W} & \mbox{True cycle structure} & \mbox{Estimate} \\
    \hline
    $D_{7}(x,4)$  & $\ff_{31}$ & $15$ & $\{ 1,12,16\} $  & $ \{1,4,12,16\}$
    \\
    \hline
    $D_{29}(x,287)$  & $\ff_{307}$ & $153$ & $\{ 1, 53, 200\} $ & $ \{1, 8, 40, 53, 200\}$
    \\
    \hline
    $D_{11}(x,732)$ & $\ff_{1009}$ & $488$  & \shortstack{$\{ 1, 2, 4, 6, 9, 14 $ \\ $ 76, 84, 132, 668\}$} & \shortstack{$\{ 1, 2, 3, 4, 6, 9 $ \\ $ 12, 14, 28, 44, 76, $\\ $ 84, 132, 668\}$} 
    \\
    \hline 
   $D_{5}(x,1)$ & $\ff_{4253}$ & $354$ & $ \{ 1,177,354\}$ & \shortstack{$\{1,2,3,6,59,$\\ $ 118,177,354\}$} \\
    \hline
     \shortstack{$26 x^{27} + 8 x^{22} $\\ $+ 3 x^{12} + 6 x^{7} + 20 x^{2}$} \cite{AMR} & $\ff_{31}$ & $4$ & $\{1,4\}$ & $\{1,2,4\}$ \\
    \hline
\end{tabular}
 \caption{Computation of cycle structure of Permutation Polynomials}
    \label{tab:CycStr}
\end{table}
The work \cite{Mullen2016} also provides a computational framework to compute the cycle structure of the permutation polynomial $f$ by constructing a matrix $A(f)$, of dimension $q\times q$ through the coefficients of the (algebraic) powers of $f^{k}$, $k = 0,1,\dots,q-1$ and computing the multiplicative order of the eigenvalues of this matrix $A(f)$ over a suitable field extension. In our work, to compute the cycle structure of the permutation polynomial, we have to compute the solutions of the associated linear dynamical system (\ref{eq:LDynPP}). This computation amounts to computing the multiplicative order of the eigenvalues of the matrix $M$ over a suitable field extension \cite{Gill1}. From the table, we see that the dimension of the matrix $M$, which is used to compute the cycle lengths, is not necessarily $q$. Hence, this approach does not necessarily involve matrices of dimension $q$ in all cases. 

\section{Linear representation and inversion of parameterized polynomial functions}
\label{sec:LR-ParPP}
We developed a linear representation theory for functions over $\ff$ in the previous section. This section extends the idea to a family of functions over $\ff$ defined through a $\ff$-valued parameter. The well-known Dickson polynomial is one such motivating example for this section. Consider a parameter dependent function $\Fl: \ff \to \ff$, where $\lambda$ is an $\ff$-valued parameter. Any parametric function can also be viewed as a function $F(\lambda,x): \ff^2 \to \ff$. In this section, we develop a linear representation for such functions, explore the parametric dependent invertibility of $\Fl$, and construct the parametric inverse of $\Fl$. 

A map $\Fl$ with an $\ff$-valued parameter is defined to be \emph{parametrically invertible} over $\ff$ if the map $\fb := F_{\lambda = \beta}$ is invertible for each $\beta \in \ff$. We define the \emph{parametric inverse} of $\Fl$ as another paramteric function $G_\lambda(x)$ (with the same parameter $\lambda$), such that $G_\lambda(x)$ satisfies
\[
\Fl \circ G_\lambda := F(\lambda,G(\lambda,x)) = Id_{\ff}(x) = G(\lambda,F(\lambda,x)) =: G_\lambda \circ \Fl 
\]
where $Id_{\ff}(x)$ is the identity map over $\ff$ (and is independent of the parameter $\lambda$).

Any parameterized function $\phi_{\lambda} : \fq \to \fq$ can be written in the form 
\begin{align}
\begin{aligned}
\phi_{\lambda}(x) &= \sum_{i = 0}^{q-1} p_i(\lambda) x^i \\ 
&=\begin{bmatrix} p_{0}(\lambda) & p_1(\lambda) & \dots & p_{q-1}(\lambda) \end{bmatrix} \begin{bmatrix} 1 \\ x \\ \vdots \\ x^{q-1}\end{bmatrix}  \\
&= v_{\phi}(\lambda)^T X
\end{aligned}
\label{eq:PF_mat_eq}
\end{align}
where $v_\phi (\lambda)= [ p_{0}(\lambda) \ p_1(\lambda) \ \dots \ p_{q-1}(\lambda) ]^T $ and $X = [1 \ x \ x^2 \ \dots \ x^{q-1}]^T$. This gives a $1-1$ correspondence between any parameteric function $\phi_{\lambda}(x)$ with $v_{\phi}(\lambda)$. Each of $p_i(\lambda)$ in (\ref{eq:PF_mat_eq}) can be considered as a rational function over $\ff$ and hence $p_i(\lambda) \in \mcR$, the ring of fractions over $\ff[\lambda]$ defined as 
\[
\mcR = \{ \frac{f}{g} \ \ | f,g \in \ff[\lambda]\ \ \mbox{and}\ \ g \in S\}
\]
where $S$ is the set of \emph{non-vanishing functions} over $\ff$ defined as 
\[
S = \{ g \in \ff[\lambda] | g(\alpha) \neq 0 \ \ \forall\ \alpha \in \ff \}
\]
and this set $S$ is a closed under multiplication. It follows clearly that $\mcR$ is a principal ideal domain (PID) and the set of rational functions $h = f/g$ in which both $f$ and $g$ belong to $S$ form the units of $\mcR$. Each $v_\phi(\lambda)$ given in (\ref{eq:PF_mat_eq}) is an element of $\mcR^{q \times 1}$, which is a finitely generated, and free module $\mM$ over $\mcR$. When $\phi_{\lambda}(x) = \Fl(x)$, we get $v_F$, the parameteric coeffiecients of the function $\Fl$. 

\subsubsection*{Koopman operator for the map $\Fl$}
Next we define the Koopman operator $\mckfl$ for the parametric map $\Fl$ and its action on functions $\phi_\lambda:\ff \times \ff \rightarrow \ff$. Given any function $\phi_{\lambda} = \phi(\lambda,x)$, we define the action of $\mckfl$ acting on $\phi_\lambda$ in the following way 
\begin{align}
\label{eq:ParaDual}
\mckfl\phi_\lambda = \phi(\lambda,\Fl(x))
\end{align}
Equivalently, $\mckfl$ can be thought of as an operator acting on $v_\phi$ in the following way
\[
\mckfl (v_\phi^T X) = v_\phi^T \mckfl X =: v_{\Fl \phi}^T X.
\]
where $v_{\Fl \phi}^T$ is the parametric coefficients of $\phi(\lambda,\Fl(x))$ defined in (\ref{eq:ParaDual}). 
\subsubsection*{$\mcR$-Linearity of $\mckfl$}
Given any two $\phi_1(\lambda,x)$ and $\phi_2(\lambda,x)$,
\begin{align*}
\mckfl(\al(\lambda)\phi_1(\lambda,x)+\beta(\lambda)\phi_2(\lambda,x)) &=  \alpha(\lambda)\phi_1(\lambda,\Fl(x)) +  \beta(\lambda)\phi_2(\lambda,\Fl(x)) \\
&= \alpha(\lambda)\mckfl \phi_1(\lambda,x) +\beta(\lambda) \mckfl \phi_2(\lambda,x)
\end{align*}

This makes $\mckfl$ a $\mcR$-linear map over parametric maps $\phi_\lambda$, or equivalently a $\mcR$-linear map over $v_\phi$ in $\mM$. Hence there exists a matrix $M_\lambda$ over $\mcR$ such that the action of $\mckfl$ in (\ref{eq:ParaDual}) can be written as
\begin{align}
\label{eq:Fl_mat}
   \mckfl v_\phi = M_\lambda v_\phi
\end{align}


\subsection{Linear representation of parameterized functions}
Next, a notion of linear representation is developed for parametrized functions. Starting from the coordinate function $\chi(\lambda,x) := x$, under the action of $\mckfl$ as defined in (\ref{eq:ParaDual}), the following sequence can be generated
 \begin{align}
 \label{eq:Fl_seq}
 \chi(x) = x \xrightarrow{\mckfl} \Fl(x)  \xrightarrow{\mckfl} \Fl(\Fl(x)) = \Fl^{(2)} (x) \xrightarrow{\mckfl} \cdots 
 \end{align}
where each of the functions $\Fl^{(i)}(\chi)$ can be viewed as parameterized functions. Corresponding to the sequence of parametrized functions
\[
\chi, \mckfl (\chi), (\mckfl)^{2} (\chi), \dots 
\]
one can associate a sequence 
\begin{align}
v_{F_0}, v_{F_1}, v_{F_2}, \dots 
\label{eq:VF_seq}
\end{align}
where $v_{F_i} \in \mM$ are computed as in (\ref{eq:PF_mat_eq}) with $\phi_{\lambda}(x) = \Fl^{(i)}(x)$.

\begin{proposition}
There exists a smallest $N$ such that the $v_{F_N}$ as defined in the sequence (\ref{eq:VF_seq}) satisfies  
\begin{align}
 v_{F_N} = \sum_{i=0}^{N-1} \alpha_i(\lambda) v_{F_i}
\label{eq:LD_coeff_Fl}
\end{align}
$\al_i(\lambda) \in \ff(\lambda)$, where $\ff(\lambda)$ is the field of rational fractions and $v_{F_i}$, $i= 0,\dots, N-1$ are $\ff(\lambda)$-linearly independent.
\end{proposition}
\begin{proof}
Given that $\mcR$ is a PID, we can embed $\mcR$ into the field of rational fractions{\footnote{The field of rational fractions $\ff(\lambda) = \{ f/g, f,g \in \ff[\lambda],  g\neq 0$\}}} $\ff(\lambda)$. The linearity of the operator $K_\lambda$ carries over to the field $\ff(\lambda)$. Considering $\tilde{\mM}$ as an extension of the $\mcR$-module $\mM$ as a vector space over $\ff(\lambda)$, we know that $v_{F_0}$ has a minimal annihaliting polynomial (MAP) $m(X)$ over the field $\ff(\lambda)$ which is a monic polynomial of degree $N$ in the indeterminate $X$ such that 
\[
m(K_\lambda) v_{F_0} = 0 
\]
Considering $m(X)$ to be 
\begin{align}
\label{eq:MAP}
m(X) = X^N - \sum_{i = 0}^{N-1} \alpha_{N-i} X^{N-i}\ \  \mbox{with} \ \ \alpha_i \in \ff(\lambda)
\end{align}
and since $v_{F_0} \in \tilde{\mM}$, we have  
\[
m(K_\lambda) v_{F_0} = 0 \implies \bigg( K_\lambda^N - \sum_{i=0}^{N-1} \alpha_{N-1} (K_\lambda)^{N-i} \bigg)v_{F_0} = 0
\]
and from (\ref{eq:Fl_mat}) we see that 
\[
(K_\lambda)^i v_{F_0} = v_{F_i}
\]
Hence we have the relation 
\[
v_{F_N} = \sum_{i=0}^{N-1} \alpha_i(\lambda) v_{F_i}
\]
\end{proof}

\begin{lemma}
Given the sequence of iterates $\Fl^{(i)}(x)$ of $\Fl(x)$ under $\mckfl$ as in (\ref{eq:Fl_seq}), and its corresponding $v_{F_i}$ as in (\ref{eq:VF_seq}), if $v_{F_N}$ satisfies (\ref{eq:LD_coeff_Fl}) then 
\begin{align}
\label{eq:fl_rec} 
\Fl^{(N)}(x) = \sum_{i = 0}^{N-1} \alpha_i(\lambda) \Fl^{(i)}(x)
\end{align}
\end{lemma}
\begin{proof}
Given that $\Fl^{(N)}$ is represented as $v_{F_N} \in \mcR^{1 \times q}$, we have
\begin{align*}
    \Fl^{(N)}(x) &= v_{F_N}^T \begin{bmatrix} 1 \\ x \\ \vdots \\ x^{q-1}\end{bmatrix} \\ 
    &= \big( \sum_{i=0}^{N-1} \alpha_i(\lambda) v_{F_i}^T \big) \begin{bmatrix} 1 \\ x \\ \vdots \\ x^{q-1}\end{bmatrix} \\ 
    &= \sum_{i=0}^{N-1} \big( \alpha_i(\lambda) v_{F_i}^T \begin{bmatrix} 1 \\ x \\ \vdots \\ x^{q-1}\end{bmatrix} \big)\\ 
    &= \sum_{i=0}^{N-1} \alpha_i(\lambda) \Fl^{(i)}(x)
\end{align*}
\end{proof}
\subsection*{Linear representation and complexity of $\Fl(x)$}
Given  $\hsl := [\chi, \mckfl \chi, \dots, (\mckfl)^{(N-1)} \chi]^T$, the action of $\mckfl$ on $\hsl$ can be written as 
\begin{align}
\label{eq:Mlambda1}
\mckfl \hsl = \Ml \hsl
\end{align}
where $\Ml \in \ff(\lambda)^{N \times N}$ and defined as
\beq
\label{eq:Mlambda}
M_\lambda = \begin{bmatrix} 0 & 1 & 0 & \dots & 0 & 0 \\
0 & 0 & 1 & \dots & 0 & 0 \\ 
\vdots & \vdots & \vdots & \ddots & \vdots & \vdots \\
0 & 0 & 0 & \ddots & 0 & 1\\
\al_0(\lambda) & \al_1(\lambda) & \al_2(\lambda) & \dots & \al_{N-2}(\lambda) & \al_{N-1}(\lambda) \end{bmatrix}
\eeq
where $\alpha_i(\lambda)$ is defined as in equation (\ref{eq:LD_coeff_Fl}). This helps in defining the \emph{linear representation} of $\Fl(x)$ analogous to the one defined for $f(x)$ in (\ref{reprofonevarf}) in the following way
\beq
\Fl(x) = \mckfl \chi = \langle e_1, \Ml \hsl\rangle = \langle \Ml^T e_1, \hsl \rangle = \Fl(x)
\eeq
The degree $N$ of the MAP $m(X)$ is defined to be the \emph{linear complexity} of the parameterized function $\Fl(x)$. 



\subsection{Parametric invertibility}
The following theorem establishes a condition for a given function $\Fl$ to be parametrically invertible.
\begin{theorem}
Given a parameterized function $\Fl$, $\Fl$ is parametrically invertible over $\fq$ if and only if $\alpha_0$ is an unit of $\mcR$.
\end{theorem}

\begin{proof}

Sufficiency: Assuming $\al_0(\lambda)$ is an unit of $\mcR$, the inverse of $\Ml$ exists over $\ff(\lambda)$. Construct 
\begin{align}
\label{eq:Fl-inverse}
\Gl =  \langle e_1, \Ml^{-1} \hsl \rangle = e_1^T \Ml^{-1} \hsl 
\end{align}
It can be verified that 
\begin{align*}
\Fl(\Gl) = \mckfl (\Gl) &= \mckfl (e_1^T \Ml^{-1} \hsl)  \\
    & = e_1^T \Ml^{-1} \mckfl \hsl \quad \mbox{By linearity of}\ \mckfl\\
    &= e_1^T \Ml^{-1} \Ml \hsl \quad \mbox{By equation (\ref{eq:Mlambda1})} \\
    & = e_1^T \hsl \\
    &= \chi(x) = Id_{\ff}
\end{align*}
Necessity: Choosing a $\be \in \ff$, let $\fb$ be the function corresponding to $\Fl$ when $\lambda = \be$. Such an $\fb$ is a function from $\ff \to \ff $, and from the previous section, the linear representation of $\fb$ can be computed using $\mckf_\beta$, the Koopman operator corresponding to the map $\fb$. Similar to the equation (\ref{alphacoeff}), one can write   
\[
(\mckf_\beta)^{n_\beta} = \sum_{i=0}^{n_\beta-1} c_{\beta,i} (\mckf_\be)^i
\]
where $n_\beta$ is the linear complexity of $\fb$ and $c_{\beta,i} \in \ff$. This relation is expressed alternatively as polynomial $p_\beta(x)$
\begin{align}
\label{eq:pi_poly}
p_\be(X) = X^{n_\be} - \sum_{i=0}^{n_\be-1} c_{\be,i} X^i
\end{align}
as the minimal annihilating polynomial of $\mckf_\beta$. 

Consider the polynomial $q(X)$ 
\[
q(X) = \prod_{\be \in \ff} p_\be(X).
\]
Given any polynomial $p(X) = \sum_{i=0}^n c_i X^i$, with $c_i \in \ff(\lambda)$ define 
\[
p(\mckfl)  = \sum_{i=0}^n c_i (\mckfl)^i
\]
We claim that $q(\mckfl) = 0$: This is because for each $\be \in \ff$
\begin{align*}
q(\mckf_\be) &=  \bigg( \prod_{\hat{\beta} \in \{\ff - \be\}} p_{\hat{\beta}}(\mckf_\be) \bigg) p_\be(\mckf_\be) \\  
    &=  \bigg( \prod_{\hat{\be} \in \{\ff - \be\}} p_{\hat{\be}}(\mckf_{\be}) \bigg) (0) \\
    &= 0
\end{align*}
and $q(X)$ is an annihilator for each of $\mckf_\be$ and hence it is an annihilator for $\mckfl$. 
But from (\ref{eq:fl_rec}), we know that $m(X)$ 
\[
m(X) = X^N - \sum_{i=0}^{N-1} \al_i(\lambda) X^i, 
\]
is the smallest degree annihilating polynomial for $\mckfl$. Hence $\mbox{deg}(m) \leq \mbox{deg}(q)$ and $m(X)|q(X)$. Also 
\[
m_{\beta}(X) := X^N-\sum_{i=0}^{N-1} \al_i(\lambda = \beta) X^i
\]
satisfies
\[
m_{\be}(\mckf_\beta) = 0
\]
and
\[
m_{\be} (X) | q(X)
\]
Since $\Fl$ is invertible, $\fb$ is also invertible. Hence, each $p_\be(X)$ in equation (\ref{eq:pi_poly}) has a non-zero constant term (from Lemma \ref{lem:NS_singlevar}), and hence $q(X)$ also has a non-zero constant term (as it is a product of polynomials, each having a non-zero constant term). This makes the constant term of $m_{\be} (X) = \al_0(\lambda = \be)$ to be non-zero. 

Since this is true for all $\be \in \ff$, the $\al_0(\lambda)$ is non-vanishing all over $\ff$ and hence it is an unit of $\mcR$. 
%
\end{proof}

In many cases, the map $\Fl$ may not be parametrically invertible, but it is still important to know if it is invertible for some specific values of $\lambda \in \ff$. The following corollary follows from the above theorem and establishes the invertibility of the map $\Fl$ for specific $\lambda$. 

\begin{corollary}
Given a parameterized function $\Fl$ and its linear representation as $\Ml$, $\Fl$ is invertible for a specific $\lambda = a \in \fq$ if and only if $M_{\lambda = a}$ is a non-singular matrix or equivalently $\al_0(a) \neq 0$.
\end{corollary}

\begin{remark}
Given a parameterized PP $\Fl$ over $\fq$, for each $\lambda_0 \in \fq$, its cycle structure can be computed to be $\Sigma_{\lambda_0}$. An estimate of $\Sigma_{\lambda_0}$ can be computed through the corresponding cycle set of the matrix $M_{\lambda = \lambda_0}$ as discussed in Section \ref{sec:CycStr}.  
\end{remark}

\subsection{Numerical examples}
The linear representation of maps and their inverses to two specific cases of parameterized permutation polynomials is constructed as an application. The first one is a parameterized polynomial over $\ff_{13}$ from the paper \cite{ZhengWangWei}, and the second one is a Dickson Polynomial over $\ff_{17}$. All the computations are carried out using the open-source software SAGEmath. 

\subsubsection{Parameterized polynomial over $\ff_{13}$} 
In the paper by Zheng Et al. \cite{ZhengWangWei}, it is shown that the (normalized) polynomial (i.e., the polynomial function) $f_a(x) = x^5+ax^3+3a^2x$ is a permutation over $\ff_{13}$ whenever $a$ is a non-square element of $\ff_{13}$. The non-square elements of $\ff_{13}$ are $2, 5, 6, 7, 8, 11$. 

The linear representation for $f_a(x)$ is constructed using the abovementioned theory. Starting from $\psi_1(x) = \chi(x) = x$, the sequence $\psi_i(x) = \mckf_a \psi_{i-1}(x)$ is computed.
\begin{align}
\label{eq:psi-def}
\begin{aligned}
    \psi_1(x) &= x , \quad \quad \quad \quad \psi_2(x) = x^{5} + ax^{3} + 3a^2 x \\
     \psi_3(x) &= 6a^{7} x^{11}  -  a^{8}x^{9}  + 2a^{9} x^{7}  + 9a^{10} x^{5}  -  a^{3}x^{11}  + 6a^{4} x^{9}  + 5a x^{11}  + 10 a^{5} x^{7} - a^{2}x^{9}  +a^{6} x^{5}  \\ 
    &\quad  + 5a^{3} x^{7}  +a^{7} x^{3}  + 3 a^{4}x^{5}  + 6  a^{5}x^{3} + 3 a^{2}x^{5}  + 9 a^{6}x  + 3 a^{3}x^{3}  + 9a^{4} x  + ax^{3}  + 3a^{2} x  + x \\
    \psi_4(x) &= 6a^{11} x^{11}  + 10a^{12} x^{9}  + 8 a^{10}x^{9}  + 6  a^{7}x^{11} + 10a^{11} x^{7}  + 6a^{8} x^{9}  + 7a^{12} x^{5}  + 2a^{5} x^{11}  + 8a^{9} x^{7}   \\
    &\quad + 6a^{6} x^{9}  + 11a^{3} x^{11}  + a^{7}x^{7}  + 4 a^{11}x^{3}  + a^{4}x^{9}  + 9  a^{8}x^{5} + 8 a x^{11} + 10 a^{5}x^{7}  + 9  a^{9}x^{3} - a^{2} x^{9}   \\ 
    &\quad +  a^{6}x^{5} + 5 a^{10} x + 8 a^{3} x^{7} + 5 a^{7}  x^{3}+ 2 a^{4}x^{5}  + 8a^{8} x  + 10 a x^{7} + 7a^{5} x^{3}  -  a^{2}x^{5}  + 6a^{6} x  \\
    &\quad + 9 a^{3} x^{3} + x^{5} + 6a^{4} x  + 8  ax^{3} + 6 a^{2} x  \\
     \psi_5(x) &= 9 a^{11}x^{11}  + 2a^{12} x^{9}  + 10 a^{9} x^{11} + 10a^{10} x^{9}  +  a^{11}x^{7} + 4 a^{8} x^{9} + 11a^{12} x^{5}  + 11 a^{9} x^{7} + 8a^{6} x^{9}  \\ 
    &\quad + 5a^{10} x^{5}  -  a^{3} x^{11} + 10 a^{7} x^{7} + 6  a^{11}x^{3} + a^{4} x^{9} + 2a^{8} x^{5}  + 11 a x^{11} + 10a^{9} x^{3}  + 10 a^{2}x^{9}   \\ 
    &\quad + 9a^{6} x^{5}  + 9a^{10} x  + 9a^{3} x^{7}  + 3a^{7} x^{3}  + 3  a^{4}x^{5} + 11a^{8} x  + 4 a x^{7} + 7a^{5} x^{3}  + 6 a^{2} x^{5} + 10a^{6} x  \\
    &\quad + 7 a^{3}x^{3}  + 2a^{4} x  + 8a x^{3}  + 10a^{2} x  + x  \\
    \psi_6(x) &= 6a^{11} x^{11}  + 11  a^{12}x^{9} +  a^{9}x^{11} + 7 a^{10}x^{9}  + 9a^{7} x^{11}  + 8a^{11} x^{7}  + 11a^{8} x^{9}  + 7a^{12} x^{5}  + 6a^{5} x^{11}   \\ 
    &\quad + a^{9}x^{7}  + 11a^{6} x^{9} + 11 a^{10} x^{5} +  a^{3}x^{11} + 6 a^{7}x^{7}  + 10 a^{11}x^{3}  + 8a^{4} x^{9}  + 4a^{12} x  + 9a x^{11}  \\
    &\quad + 11a^{5} x^{7}  + a^{9}x^{3}  + 10a^{6} x^{5}   + a^{3}x^{7}  + 8 a^{7}x^{3}  + 10a^{8} x  + 4  ax^{7} -   a^{5}x^{3} + 4a^{2} x^{5}  + 3 a^{6}x  \\
    &\quad + 2a^{3} x^{3}  + x^{5} -  a^{4}x  + 3a x^{3}  + 5 a^{2} x
\end{aligned}
\end{align}
It is computed that $\psi_7(x) = \mckf_a \psi_6(x)$ can be written as 
\[
\psi_7(x) = \sum_{i = 1}^{6} \al_i \psi_i(x)
\]
This is a matrix $M$ over $\ff_{13}^{6 \times 6}$
\[
M = \left(\begin{matrix}
0 &1 & 0 & 0 & 0 & 0  \\
0&  0 & 1 & 0 & 0 & 0   \\
0 & 0 & 0& 1 & 0 & 0   \\
0 & 0 &0 & 0 & 1 & 0 \\
0 & 0 & 0 & 0&0&1  \\
\al_0 &\al_1 & \al_2 & \al_3 & \al_4 & \al_5 
\end{matrix}\right)
\]
where 
\begin{align*}
    \al_0 &= \frac{5 a^{28} + 7 a^{26} + 2 a^{22} + 2 a^{20} + 5 a^{18} -  a^{16} + 5 a^{14} + 5 a^{8} + 10 a^{6} + 7 a^{4} + 7 a^{2} + 11}{6 a^{16} + 3 a^{14} + 9 a^{12} + 5 a^{10} + 11 a^{8} + 5 a^{6} -  a^{4} + 6 a^{2}} \\
    \al_1 &= \frac{9 a^{26} + a^{24} + 11 a^{22} + 9 a^{20} + 3 a^{16} + 9 a^{14} + 10 a^{12} + 2 a^{10} -  a^{8} + 9 a^{6} + 10 a^{4} + 6}{6 a^{16} + 3 a^{14} + 9 a^{12} + 5 a^{10} + 11 a^{8} + 5 a^{6} -  a^{4} + 6 a^{2}} \\
    \al_2 &= \frac{- a^{16} + 5 a^{14} + 4 a^{12} + 9 a^{8} -  a^{6} + 9 a^{4} + 2 a^{2} + 7}{8 a^{14} + 4 a^{12} -  a^{10} + 11 a^{8} + 6 a^{6} + 11 a^{4} + 3 a^{2} + 8} \\
    \al_3 &=  \frac{8 a^{16} + 5 a^{14} -  a^{12} + 9 a^{8} + 2 a^{6} + 5 a^{4} + 7 a^{2} + 7}{6 a^{16} + 3 a^{14} + 9 a^{12} + 5 a^{10} + 11 a^{8} + 5 a^{6} -  a^{4} + 6 a^{2}} \\
    \al_4 &= \frac{- a^{16} + 4 a^{14} + 11 a^{12} + 5 a^{10} + 10 a^{8} + 7 a^{6} + 7 a^{4} + 8 a^{2} + 2}{6 a^{16} + 3 a^{14} + 9 a^{12} + 5 a^{10} + 11 a^{8} + 5 a^{6} -  a^{4} + 6 a^{2}} \\
    \al_5 &=  \frac{3 a^{14} + 2 a^{12} + 5 a^{10} + 7 a^{8} + 8 a^{4} + 8 a^{2} + 10}{8 a^{14} + 4 a^{12} -  a^{10} + 11 a^{8} + 6 a^{6} + 11 a^{4} + 3 a^{2} + 8}.
\end{align*}
Assuming $\hat{\psi} = [\psi_1(x), \psi_2(x), \psi_3(x), \psi_4(x), \psi_5(x), \psi_6(x)]^T$, where $\psi_i(x)$ are defined in (\ref{eq:psi-def}), the linear representation of the parametric function $f_a(x)$ is
\begin{align}
\begin{aligned}
f &= e_1^T M \hat{\psi} \\
 &= \begin{bmatrix} 1 & 0 & 0 & 0 & 0&0 \end{bmatrix} M \hat{\psi} \\
 &= \begin{bmatrix} 0 & 1 & 0 & 0 & 0 & 0 \end{bmatrix} \hat{\psi}
 \end{aligned}
\end{align}
It can be seen that $\alpha_0$ is not an unit of $\mcR$ as the denominator has $a^2$ as a factor. The determinant of this matrix is
\[
\mbox{det} \ M = \frac{8 a^{28} + 6 a^{26} + 11 a^{22} + 11 a^{20} + 8 a^{18} + a^{16} + 8 a^{14} + 8 a^{8} + 3 a^{6} + 6 a^{4} + 6 a^{2} + 2}{6 a^{16} + 3 a^{14} + 9 a^{12} + 5 a^{10} + 11 a^{8} + 5 a^{6} -  a^{4} + 6 a^{2}}
\]
Since $a \in \ff_{13}$, using the fact that $a^{13} = a$, the determinant of $M$ can be written in terms of irreducible factors as 
\begin{align*}
\mbox{det}\ M = \frac{(a + 4)(a + 9)(a + 10)(a + 12)(a + 1)(a + 3)(a^{2} -  a + 12)(a^{2} + a + 12)}{2 a^{2}  (a^{2} -  a + 3) (a^{2} + a + 3) (a^{3} -  a^{2} -  a + 6) (a^{3} + a^{2} -  a + 7)}
\end{align*}

It can be clearly seen that the matrix $M$ has zero determinant for $a = 1,3,4,9,10,12$ in $\ff_{13}$ and for $a = 0$, the operator M is ill-defined as the determinant does not exist, and hence the determinant of $M$ is non-zero only when $a = 2, 5, 6, 7, 8, 11$ which are precisely the non-square elements of $\ff_{13}$. 

\begin{remark}
Though the formula given in equation (\ref{eq:Fl-inverse}) is valid only when $\alpha_0$ is an unit of $\mcR$, one can, in principle, construct the formal inverse of $\Ml$ and use the fact that such a formula is valid only for those parameters $\beta$ where $M_{\lambda = \beta}$ is an invertible matrix over $\ff$.  
\end{remark}

\begin{align}
\begin{aligned}
    f_a^{-1}(x) &= e_1 ^T M^{-1} \hat{\psi} \\
    &= \begin{bmatrix} c_0 & c_1 & c_2 & c_3 & c_4 & c_5\end{bmatrix} \hat{\psi}
\end{aligned}
\end{align}
where 
\begin{align*}
    c_0 &= \frac{4 a^{20} -  a^{18} + 2 a^{16} + 8 a^{14} -  a^{12} -  a^{10} -  a^{8} + 2 a^{6} + 10 a^{4} + 6}{5 a^{22} + 7 a^{20} + 7 a^{16} + 9 a^{14} + 5 a^{12} + 6 a^{10} + a^{8} + 5 a^{6} + 6 a^{4} + 6 a^{2} + 2} \\
    c_1 &= \frac{9 a^{16} + 8 a^{14} + a^{12} + 3 a^{10} + 6 a^{8} + a^{6} + 8 a^{2}}{8 a^{26} + 4 a^{24} -  a^{22} + 8 a^{20} + 9 a^{18} + 9 a^{16} + 2 a^{14} + a^{12} + 3 a^{10} + 9 a^{8} + 9 a^{6} + 4 a^{4} + 5 a^{2} + 8} \\
    c_2 &= \frac{5 a^{16} + 8 a^{14} + a^{12} + 4 a^{8} + 11 a^{6} + 8 a^{4} + 6 a^{2} + 6}{5 a^{28} + 7 a^{26} + 2 a^{22} + 2 a^{20} + 5 a^{18} -  a^{16} + 5 a^{14} + 5 a^{8} + 10 a^{6} + 7 a^{4} + 7 a^{2} + 11} \\
    c_3 &= \frac{a^{16} + 9 a^{14} + 2 a^{12} + 8 a^{10} + 3 a^{8} + 6 a^{6} + 6 a^{4} + 5 a^{2} + 11}{5 a^{28} + 7 a^{26} + 2 a^{22} + 2 a^{20} + 5 a^{18} -  a^{16} + 5 a^{14} + 5 a^{8} + 10 a^{6} + 7 a^{4} + 7 a^{2} + 11} \\
    c_4 &= \frac{- a^{16} + 8 a^{14} + 7 a^{12} + 2 a^{10} + 6 a^{6} + 6 a^{4} + a^{2}}{8 a^{28} + 6 a^{26} + 11 a^{22} + 11 a^{20} + 8 a^{18} + a^{16} + 8 a^{14} + 8 a^{8} + 3 a^{6} + 6 a^{4} + 6 a^{2} + 2} \\
    c_5 &= \frac{6 a^{16} + 3 a^{14} + 9 a^{12} + 5 a^{10} + 11 a^{8} + 5 a^{6} -  a^{4} + 6 a^{2}}{5 a^{28} + 7 a^{26} + 2 a^{22} + 2 a^{20} + 5 a^{18} -  a^{16} + 5 a^{14} + 5 a^{8} + 10 a^{6} + 7 a^{4} + 7 a^{2} + 11}
    \end{align*}
Simplifying the above expression (by reducing the degree of $a$ by using $a^{13} = a$), the function $f_a^{-1}(x)$ is given below.
\begin{align}
\label{eq:ExZWZ1}
\begin{aligned}
 f_a^{-1}(x) = &\frac{1}{d} \Bigg[ (10 a^{11} + 7 a^{9} + a^{7} + 10 a^{5} + 7 a^{3} + a) x^{11} + (10 a^{10} + 6 a^{8} + 4 a^{6} + 11 a^{4} + 10 a^{2}) x^9 + \\
 &(6 a^{11} + 9 a^{9} + 2 a^{7} + 10 a^{5} + 10 a^{3} + 6 a) x^7 + (a^{12} + 9 a^{10} - a^{8} + 2 a^{6} + 6 a^{4} + 8 a^{2} + 11) x^5 \\
 & + (3 a^{11} + 7 a^{9} + 2 a^{7} + 10 a^{3} + 6 a) x^3 + (7 a^{12} + 7 a^{10} + 5 a^{8} -  a^{6} + a^{4} + 11 a^{2}) x \Bigg]    
\end{aligned}
\end{align}
where,
\begin{align}
d = 2 (a + 4)(a + 9)(a + 10)(a + 12)(a + 1)(a + 3)(a^{2} -  a + 12)(a^{2} + a + 12)
\label{eq:ExZWZd}
\end{align}
As discussed earlier, this function is not defined for $a \in \ff_{13}$, which are square elements. It can be verified that, indeed, these square elements are exactly the roots of linear factors of $d$, and for the non-square elements $a \in \ff_{13}$, the inverse of $f$ is given by the formula (\ref{eq:ExZWZ1}). To complete the example, the paper \cite{ZhengWangWei} gives the formula for inverse as 
\begin{align}
\label{eq:exZWZ2}
f_a^{-1}(x) = -a^2 x^9 - ax^7 + 4x^5 +4 a^5x^3 -5a^4x
\end{align}
Though these two do not appear to be the same polynomials, it can be seen in the following table that these two equations (\ref{eq:ExZWZ1}) and (\ref{eq:exZWZ2}) lead to the same functions over $\ff_{13}$ whenever $a$ is a non-square element of $\ff_{13}$.

\begin{table}[H]
    \centering
    \begin{tabular}{|c|c|c|}
    \hline
         $a$ & $f_a^{-1}(x)$ from equation (\ref{eq:ExZWZ1})  & $f_a^{-1}(x)$ from equation (\ref{eq:exZWZ2})   \\
         \hline 
         2 &  $ 9 x^{9} + 11 x^{7} + 4 x^{5} + 11 x^{3} + 11 x$ & $ 9 x^{9} + 11 x^{7} + 4 x^{5} + 11 x^{3} + 11 x$ \\
         \hline 
         5 & $x^{9} + 8 x^{7} + 4 x^{5} + 7 x^{3} + 8 x$& $x^{9} + 8 x^{7} + 4 x^{5} + 7 x^{3} + 8 x$ \\
         \hline 
         6 & $3 x^{9} + 7 x^{7} + 4 x^{5} + 8 x^{3} + 7 x$ & $3 x^{9} + 7 x^{7} + 4 x^{5} + 8 x^{3} + 7 x$ \\
         \hline 
         7 & $3 x^{9} + 6 x^{7} + 4 x^{5} + 5 x^{3} + 7 x$ & $3 x^{9} + 6 x^{7} + 4 x^{5} + 5 x^{3} + 7 x$ \\
         \hline 
         8 & $x^{9} + 5 x^{7} + 4 x^{5} + 6 x^{3} + 8 x$ & $x^{9} + 5 x^{7} + 4 x^{5} + 6 x^{3} + 8 x$ \\
         \hline 
         11 & $ 9 x^{9} + 2 x^{7} + 4 x^{5} + 2 x^{3} + 11 x$ & $ 9 x^{9} + 2 x^{7} + 4 x^{5} + 2 x^{3} + 11 x$ \\
         \hline
    \end{tabular}
    \caption{The function $f_a^{-1}(x)$ for non-square elements $a$ of $\ff_{13}$}
    \label{tab:verification}
\end{table}


\subsubsection{Example 2 : Dickson Polynomial over $\ff_{17}$}
Consider the Dickson Polynomial of degree $11$ over $\ff_{17}$
\begin{align}
\label{eq:DSon1}
D_{11}(x,a) = x^{11} + 6a x^{9} + 10a^{2}  x^{7} + 8a^{3} x^{5}  + 4a^{4} x^{3}  + 6a^{5} x 
\end{align}
where $a$ is some fixed element of $\ff_{17}$. Since $gcd(11,17^2-1) = 1$, this polynomial is a permutation polynomial\footnote{From \cite{MullenPanario}, it is known that a Dickson polynomial of degree $d$ over a finite field $\ff_q$ is a permutation polynomial if and only if $gcd(d,q^2-1) = 1$} over $\ff_{17}$ for all $a \in \ff_{17}$. It is first shown that $D_{11}(x,a)$ is indeed a permutation polynomial by verifying that the matrix $M_a$ for $D_{11}(x,a)$ is unimodular. 

The linear representation is constructed and the linear complexity of $D_{11}(x,a)$ is found to be $8$. The matrix $M_a$ has a determinant 
\[
\mbox{det} \ {M_a} = \frac{9 (a^2 - 5)  (a^6 + 2a^4 + 4a^2 - 5)  (a^8 - 6a^6 + 2a^4 - a^2 + 5)}{12  (a^4 - 4a^3 - 5a^2 - 3a - 7) (a^4 + 4a^3 - 5a^2 + 3a - 7) (a^8 + 3a^4 - 5a^2 - 7)}.
\]
It can be seen that the determinant is non-zero each $a \in \ff_{17}$ (as the numerator has no linear factors), and hence the inverse of $M_a$ exists for all $a$, and hence the function $D_{11}(x,a)$ is a permutation polynomial. 

The inverse of $D_{11}(x,a)$ is constructed as per the formula given in equation (\ref{inverforminonevar}) (and by using the fact that $a^{17} = a$).
\begin{align}
\begin{aligned}
D_{11}(x,a)^{-1} =& \frac{1}{d} \Bigg[ (4 a^{15} + 7 a^{13} + 2 a^{11} + 9 a^{9} + 6 a^{7} + 14 a^{5} + 10 a^{3} + 12 a) x^{13} \\
& \quad + (12 a^{16} + 14 a^{14} + 14 a^{10} + 4 a^{8} -  a^{4} + 9 a^{2}) x^{11} \\
& \quad + (3 a^{15} + 13 a^{13} + 7 a^{11} + 9 a^{9} + 2 a^{7} + 8 a^{5} + 15 a^{3} + 6 a) x^9 \\
& \quad + (a^{16} -  a^{14} + 12 a^{12} + a^{10} + 3 a^{8} + 6 a^{6} + 3 a^{4} + 4 a^{2}) x^7 \\
& \quad + (12 a^{15} + 10 a^{13} + 7 a^{11} + 13 a^{7} + 4 a^{5} + 11 a^{3} + 14 a) x^5 \\
& \quad + (11 a^{16} + 15 a^{14} + a^{12} + 8 a^{10} + 13 a^{8} + 10 a^{6} + 9 a^{4} + 5 a^{2} + 3) x^3 \\
& \quad + (7 a^{13} + 10 a^{11} + 2 a^{9} + a^{7} + 13 a^{5} + 8 a^{3} + 9 a) x \Bigg]
\end{aligned}
\label{eq:finvDP}
\end{align}
where 
\[
d = 9 (a^2 - 5) (a^6 + 2a^4 + 4a^2 - 5)(a^8 - 6a^6 + 2a^4 - a^2 + 5).
\]
The function was indeed validated to be the inverse of $f$ for each $a \in \ff_{17}$. As an illustration, when $a = 9 \in \ff_{17}$, the function $D_{11}(x,a)$ becomes 
\[
D_{11}(x,9) = x^{11} + 3x^9 +11x^7 + x^5 +13x^3 +14x
\]
and when $a$ is substituted in (\ref{eq:finvDP}) the inverse function is computed to be 
\[
D_{11}(x,9)^{-1} = 9x^{13} + 13x^{11} + 11x^9 + 12x^7 + 11x^5 + 11x^3 + 8x.
\]

\section{Representation of multivariate maps $F$ over $\ff^n$}
\label{sec:LR-mv}
In this section, we consider the problem of extending the linear representation of a function with one variable to maps $F : \fn \to \fn$ in $n$ variables. Such a map $F$ is defined by an $n$-tuple of polynomial functions $f_i(x_1,\dots,x_n)$
\[
F(x)=(f_1(x_1,\dots,x_n),\ldots,f_n(x_1,\ldots,x_n))^T
\]
where $f_i$ are polynomial functions $f_i:\fn \to \ff$. The Koopman operator of $F$ is the linear map $\mckf$, $\mckf*:\ffnz \to \ffnz$ defined on $\ff$-valued functions $\phi \in \ffnz$ in the following way
\[
\mckf(\phi)(x)=\phi(F(x_1,\ldots,x_n))\ \forall\ \phi \in \ffnz.  
\]
The coordinate functions $\chi_i$ $i = 1,\dots,n$ are defined as $\chi_i(x)=x_i$ for $x \in \fn$. It is to be noted that the space of functions $\ffnz$ is finite-dimensional and of dimension $q^n$ when the field is $\fq$.

\subsection{Construction of an $\mckf$-invariant subspace of $\ffnz$}
In the case of a univariate map, we constructed the cyclic subspace $S(f,\chi)$ and defined the linear representation by restricting $\mckf$ to this subspace. As a natural extension to the multi-variable case, we construct the smallest $\mckf$-invariant subspace $W$ containing all the coordinate functions. Algorithm \ref{alg:CycSub} constructs this invariant subspace.
\begin{algorithm}[H]
\begin{algorithmic}[1]
\caption{Construction of $W$: the $\mckf$-invariant subspace containing all $\chi_i(x)$}
\label{alg:CycSub}

\Procedure{$\mckf$-Invariant subspace containing $\chi_i(x)$}{}
\State \textbf{Outputs}
\begin{itemize}
    \item[] $W$ - the $\mckf$-invariant subspace which contains the coordinate functions. 
    \item[] $\mathcal{B}$ - a basis for the subspace $W$.
\end{itemize}
\State Compute the cyclic subspace  
\[
Z(\chi_1; \mckf) = \langle \chi_1,  \mckf\chi_1,\dots, (\mckf)^{l_1-1} \chi_1 \rangle.
\]
\State Choose a basis $\mathcal{B} = \{\chi_1,\mckf \chi_1,\dots,(\mckf)^{l_1-1}\chi_1 \}$.
\If{$\chi_2,\chi_3,\dots,\chi_n \in \mbox{Span}\{\mathcal{B}\}$} 
\State $W \gets \mbox{Span}\{\mathcal{B}\}$.
\State \textbf{halt}
\Else
\State Find the smallest $i$ such that $\chi_i \notin \mbox{span}\{\mathcal{B}\}$.
\State Compute the smallest $l_i$ such that 
\[
(\mckf)^{l_i} \chi_i \in \mbox{span}\{\mathcal{B}\} \cup \langle \chi_i,\mckf \chi_i,\dots, (\mckf)^{l_i-1}\chi_i \rangle.
\]
\State $V_i = \{\chi_i,\mckf \chi_i,\dots, (\mckf)^{l_i-1} \chi_i \}$.
\State Append the set $V_i$ to $\mathcal{B}$.
\State \textbf{go to} 5.
\EndIf
\EndProcedure
\end{algorithmic}
\end{algorithm}
Once this $\mckf$ invariant subspace $W$ is constructed, compute the matrix representation of $\mckf|W$ with respect to the chosen basis $\mcb$ of $W$. Let the dimension of this invariant subspace be $N$ and a basis of $W$ be denoted as
\beq\label{basisofW}
\mcb=\{\psi_1,\psi_2,\ldots,\psi_N\}.
\eeq
Let the matrix representation of $K_F=\mckf|W$ in $\mcb$ be denoted as $M$. (The notation for matrix representation is explained in (\ref{MatrixRepNotation})). Analogous to the univariate case, the dimension $N$ of the space $W$ is defined as the \emph{linear complexity} of the map $F$

\subsection{Linear representation of $F$ over $\fn$ defined by $\mckf|W$}
Since each of the coordinate functions $\chi_i(x)$ are in the subspace spanned by $\mcb$, every coordinate function has unique representation
\[
\chi_i(x)=v_i^T\hat{\psi}
\]
with a unique $v_i$ in $\ff^N$, where $\hat{\psi}$ is the $N$-tuple of basis functions $\psi_i(x)$
\[
\hat{\psi} = [\psi_1,\psi_2, \dots, \psi_N]^T.
\]
Construct the matrix $V$ as
\beq\label{Vmatrix}
V=
\left[
\begin{array}{c}
v_1^T\\v_2^T\\\vdots\\v_N^T
\end{array}
\right].
\eeq
Then we have $\hat{\chi}=V\hat{\psi}$ where $\hat{\chi}$ is the $n$-tuple of coordinate functions 
\[
\hat{\chi}=[\chi_1,\chi_2,\ldots,\chi_n]^T.
\] 
Using this representation of $\hat{\chi}$ we get 
\beq\label{RepofF}
\begin{array}{lcl}
F(x) & = & \hat{\mckf(\chi_i)}(x)\\
 & = & VM\hat{\psi}(x)
\end{array}
\eeq
which is called the \emph{linear representation} of the map $F$. Such a formula for a map $F$ should be valuable for various computations. We next show that this formula also carries over to the representation of $F^{-1}$. 

\subsection{Representation of $F^{-1}$}
First, we establish a condition for the invertibility of $F$ in terms of the representation (\ref{RepofF}).

\begin{lemma}
$F$ is invertible if and only if $M$ is non-singular.
\end{lemma}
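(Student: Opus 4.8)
The plan is to follow the template of the scalar Lemma, using two features of the invariant subspace $W$: it is finite dimensional, and by construction it contains every coordinate function $\chi_i$. Since $M$ is the matrix of $F^*|W$ in the basis $\mcb$, $M$ is non-singular exactly when the operator $F^*|W$ is an isomorphism of $W$, so it suffices to reason about $F^*|W$ and translate the conclusion to $M$ at the very end.

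For necessity I would argue as in the one-variable case. If $F$ is invertible then it is a bijection of the finite set $\fqn$, hence surjective, so any $\phi\in(\fqn)^0$ with $F^*(\phi)=\phi\circ F=0$ must vanish on $\im F=\fqn$, i.e. $\phi=0$. Thus $F^*$ is injective on the whole function space, and its restriction to the invariant subspace $W$ remains injective; since an injective endomorphism of the finite dimensional space $W$ is automatically non-singular, $M$ is non-singular.

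For sufficiency I would prove the contrapositive and, rather than chase injectivity inside $W$, exhibit a failure of \emph{surjectivity} certified by a coordinate function. If $F$ is not invertible then, $\fqn$ being finite, $F$ is not injective, so there exist $a\neq a'$ in $\fqn$ with $F(a)=F(a')$; choose a coordinate $i_0$ with $a_{i_0}\neq a'_{i_0}$. I claim $\chi_{i_0}\notin\im(F^*|W)$: if $\chi_{i_0}=F^*(\phi)=\phi\circ F$ for some $\phi\in W$, then evaluating at $a$ and $a'$ yields $a_{i_0}=\phi(F(a))=\phi(F(a'))=a'_{i_0}$, a contradiction. Since $\chi_{i_0}\in W$ while $\chi_{i_0}\notin\im(F^*|W)$, the operator $F^*|W$ is not surjective, hence not invertible, and therefore $M$ is singular.

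The step I expect to require the most care is exactly this sufficiency argument: in passing from $F^*$ on all of $(\fqn)^0$ to its restriction on $W$, one could a priori lose the equivalence between non-invertibility of $F$ and singularity of the operator, because a proper invariant subspace might fail to ``see'' the degeneracy of $F$. The construction of Algorithm \ref{alg:CycSub}, which forces all coordinate functions into $W$, is precisely what removes this danger, since the coordinate functions separate points of $\fqn$, so any coincidence $F(a)=F(a')$ is detected by some $\chi_{i_0}$ that then serves as an explicit element of $W$ outside the image of $F^*|W$. I would close by remarking that ``$F^*|W$ non-singular iff $M$ non-singular'' is immediate from $M$ being the matrix representation of $F^*|W$.
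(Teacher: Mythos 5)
Your proof is correct and follows essentially the same route as the paper: necessity is the paper's argument verbatim ($F$ invertible $\Rightarrow$ $F^*$ injective on all of $(\fqn)^0$, hence on the finite-dimensional invariant subspace $W$, hence $M$ non-singular), and sufficiency likewise exhibits a coordinate function of $W$ lying outside $\im (F^*|W)$. The only difference is the witness used: the paper picks a point $\beta\notin\im F$ and reasons about images of the functions $\phi\circ F$, whereas you pick a collision $F(a)=F(a')$ and a coordinate $\chi_{i_0}$ separating $a$ from $a'$ --- an equivalent and, if anything, more airtight rendering of the same idea.
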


\begin{proof}
Any function $\phi:\ff^n\rightarrow\ff$ satisfying $\mckf(\phi)=0$ satisfies $\phi(F(x))=0$ for all $x$. Hence when $F$ is invertible, $\phi(x)=0$ for all $x$ which proves $\phi$ is the zero function; hence $\mckf$ is one-to-one. Hence the restriction $\mckf|W$ is also one-to-one which implies $M$ is nonsingular. This proves the necessity.

Conversely, if $F$ is not invertible, then there exists distinct $x_1, x_2 \in \fn$ such that $F(x_1) = F(x_2)$. We know that by construction, all co-ordinate functions $\chi_i$ belong to $W$. Given a $\phi \in W$ and we have 
\begin{align*}
F(x_1) = F(x_2) \Leftrightarrow &\ \mckf \phi(x_1) = \mckf \phi(x_2) \\
 \Rightarrow &\ \psi(x_1) = \psi(x_2) \, \forall\, \psi\in\, \mckf(W)\\
  \Rightarrow &\ \chi_i\, \notin\, \mckf(W) \, \mbox{for some}\ \ i \\
   \Rightarrow &\  \mckf|W \mbox{ is not one-to-one on }\, W.
\end{align*}
This proves the sufficiency. Since $M$ is the matrix representation of $\mckf|W$, the matrix statement follows. 

\end{proof}

Next, we develop the representation for the compositional inverse $G$ of $F$. 

\begin{theorem}
If $F$ is invertible and $G=F^{-1}$ then $G$ has the representation
\beq\label{RepofFinv}
G(x)=VM^{-1}\hat{\psi}(x)
\eeq
where $V$ and $M$ are defined uniquely by the representation (\ref{RepofF}).
\end{theorem}

\begin{proof}
The representation of $\chi_i$ in $\mcb$ is given by $(v_i)^T\hat{\psi}$ where $\hat{\psi}=(\psi_1,\psi_2,\ldots,\psi_N)^T$ where $\{\psi_i,i=1,\ldots,N\}$ is the basis $\mcb$. Since $M$ is nonsingular when $F$ is invertible by previous lemma, it follows that
\[
(G^*)(\chi_i)=(\mckf)^{-1}(\chi_i)=(v_i)^T(F*)^{-1}\hat{\psi}=(v_i)^T(M)^{-1}\hat{\psi},
\]
which proves the form of coefficients of $(G^*)(\chi_i)$ where the matrix $V$ is defined by (\ref{Vmatrix}).
\end{proof}

\subsection{Illustrative example}
Consider the Feedback Shift Register (FSR) defined as a map $F : \ff_2^3 \to \ff_2^3$ given by the following equation
\[
F(x_1,x_2,x_3)= 
\begin{bmatrix} 
x_2 \\ x_3 \\ x_1 + x_2x_3
\end{bmatrix}.
\]
The $\mckf$-invariant subspace is constructed starting from $\chi_1(x) = x_1$ using Algorithm \ref{alg:CycSub} as follows
\begin{align*}
&x_1 \to x_2 \to x_3 \to x_1 + x_2 x_3 \to x_2 + x_1x_3 + x_2x_3 \to x_3 + x_1x_2 + x_1x_3 \to x_1 + x_1x_2 + x_1x_3.
\end{align*}
The last function is linearly dependent on the previous function as follows
\[
x_1 + x_1x_2 + x_1x_3 = x_1 + x_2 + x_3 + (x_2+x_1x_3+x_2x_3) + (x_3+x_1x_2+x_1x_3).
\]
A basis of the cyclic space can be chosen as 
\[
\mcb = \{ x_1, x_2, x_3, x_1 + x_2 x_3, x_2 + x_1x_3+ x_2x_3, x_3 + x_1x_2 + x_1x_3 \}.
\]
The matrix representation of the Koopman operator restricted to this subspace is obtained as
\[
M = \begin{bmatrix} 
0 & 1 & 0 & 0 & 0 & 0 \\
0 & 0 & 1 & 0 & 0 & 0 \\
0 & 0 & 0 & 1 & 0 & 0 \\
0 & 0 & 0 & 0 & 1 & 0 \\
0 & 0 & 0 & 0 & 0 & 1 \\
1 & 1 & 1 & 0 & 1 & 1
\end{bmatrix}.
\]
$M$ is verified to be nonsingular and hence $F$ is invertible. 
Each of the coordinate functions is represented in $\mcb$ as 
\begin{align*}
\chi_1  = e_1^T\hat{\psi}, \quad \quad \chi_2  =  e_2^T\hat{\psi}, \quad \quad \chi_3  =  e_3^T\hat{\psi}
\end{align*}
where $e_i$ are first three Cartesian vectors of $\ff_2^6$ and $\hat{\psi}$ is the vector of basis functions. $F$ has a linear representation
\[
F(x)=
\begin{bmatrix}
e_1\\e_2\\e_3
\end{bmatrix}
M\hat{\psi}(x)=
V M \hat{\psi}
\]
Since $M$ is nonsingular, $F$ has an inverse and can be computed using linear representation. In terms of $M^{-1}$ the representation of $F^{-1}$ is computed as
\[
\begin{array}{lcl}
F^{-1}(x) & = &
V M^{-1}\hat{\psi}(x) = \begin{bmatrix}
x_3+x_1x_2\\x_1\\x_2
\end{bmatrix}.
\end{array}
\]
It is easily verified that 
\[
F\bigg( \begin{bmatrix} x_3 + x_1 x_2 \\ x_1 \\ x_2 \end{bmatrix} \bigg) = \begin{bmatrix} x_1 \\ x_2 \\ x_3 \end{bmatrix}.
\]
\subsection{Solutions to a system of multivariate difference equations}
A system of multivariate difference equations over $\ff$ is defined as 
\[
\begin{bmatrix}
x_1(k+1) \\ x_2(k+1) \\ \vdots \\ x_n(k+1) 
\end{bmatrix} = \begin{bmatrix}
f_1(x_1(k),\dots,x_n(k)) \\ f_2(x_1(k),\dots,x_n(k)) \\ \vdots \\ f_n(x_1(k),\dots,x_n(k)) 
\end{bmatrix}
\]
where $x_i \in \ff$. Such systems occur commonly in dynamical systems theory and are known as the state space representation of a dynamical system. Solutions to such systems involve computing the value of the state $x(k):= [x_1(k),\dots,x_n(k)]^T$ starting from initial condition $x(0)$. Since $f_i$ are non-linear, computing the solutions is a hard problem, and the linear representation of $F = [f_1,\dots,f_n]^T$ gives a linear algebraic computational methodology for the same. The solution $x(k)$ for any $k$ starting from an initial condition $x(0)$ is given by
\[
x(k) = V M^k \hat{\psi}(x(0)).
\]
Further details about the dynamic nature of the map $F$ using the linear representation is reported in the work \cite{RamSule}. 

\section{Linear representation of the group generated by multiple invertible maps}
\label{sec:LR-FG}

Consider a finite set of non-singular maps $F_1,F_2,\dots,F_N$ 
\[
F_i : \ff^n \to \ff^n.
\]
A finite group, $G_F$, can be generated from $F_i$ using composition as the group operation. In this section, we devise a procedure to compute the linear representation of the group $G_F$ over $\ff$. The motivation for such a linear representation is as follows. If such a representation is developed, the non-linear operations $F_i$ over $\ff^n$ and their compositions can be studied in linear algebraic computations. In order to extend the notion of linear representation developed in this work, we construct the smallest subspace of $\ffnz$, which contains all the coordinate functions $\chi_i$ and is invariant to all the operators $\mckf_i$. 

\subsection*{Computing an invariant subspace for multiple operators over $\ffnz$}
We devise Algorithm \ref{alg:GroupW1} to construct the subspace $W \subset \ffnz$ satisfying the following properties
\begin{itemize}
    \item $W$ contains the function $\chi_i$.
    \item $W$ is invariant to each of the linear operators $\mckf_i$ acting on $\ffnz$.
    \item $W$ is the smallest such subspace satisfying the above two properties.
\end{itemize}

\begin{algorithm}[h]
\begin{algorithmic}[1]
\caption{Construction of $W$: the smallest $\mckf_i$-invariant subspace containing all $\chi_i(x)$ for maps $F_1,\dots, F_N$}
\label{alg:GroupW1}
\Procedure{Invariant subspace containing $\chi_i(x)$}{}
\State \textbf{Outputs}
\begin{itemize}
    \item[] $W$ - the smallest subspace containing all the coordinate functions and invariant to each $\mckf_i$.
\end{itemize}
\State \textbf{Inputs}
\begin{itemize}
    \item[] $W_i$ - the smallest $\mckf_i$-invariant subspace containing all the coordinate functions using Algorithm \ref{alg:CycSub}.
\end{itemize}
\State Initiate $W = \sum W_i$.
\For{$i \in \{1,2,\dots,N \}$}
    \State Compute $W_i$ - the smallest $\mckf_i$-invariant subspace containing $W$.
\EndFor
\State $\bar{W} = \sum W_i$
\If{ $W$ equals $\bar{W}$}
    \State $W$ is the required subspace invariant to each $F_i$ and contains all coordinate functions.
    \State Halt.
\Else
    \State Replace $W$ by $\bar{W}$. 
    \State Goto step 5.
\EndIf

\EndProcedure
\end{algorithmic}
\end{algorithm}

Once this subspace $W$ is constructed, and a basis $\mcb$ for $W$ is chosen, the restriction of $\mckf_i|W$ can be written as a matrix $M_i$. Each of these maps $F_i$ can then be associated with a matrix $M_i$ with the basis $\mcb$. If the dimension of $W$ is $N_G$, all the matrices $M_i$ are of dimension $N_G$ over $\ff$.

\subsection{Representation of the group $G_F$}

Given the matrix representations $M_i$ for each of $\mckf_i|W$ using the basis $\mcb$, we look for a representation $\sigma: G_F \to GL(N_G,\ff)$ satisfying the following. The evaluation of $\sigma$ on each $F_i$ gives the matrix $M_i$ and given a map $g \in G_F$, $\sigma(g) = M_g$,  where $M_g$ satisfies $M_g = \mckf_g|W$ for the chosen basis $\mcb$.

Considering $g = F_iF_j := F_i(F_j(x))$
\begin{align}
    \label{eq:GrpRep}
\begin{aligned}
    \sigma(g) &= \sigma(F_i F_j) \\
            & = \mckf_{F_1 F_j} | W \\
            &= M_i M_j
\end{aligned}
\end{align}

Similarly for $g = F_i^{-1}$
\begin{align*}
    \sigma(g) &= \sigma(F_i^{-1}) \\
            &= \mckf_{F_i^{-1}}|W \\
            &= M_i^{-1}
\end{align*}
The equation (\ref{eq:GrpRep}) gives the representation $M \in GL(N_G,\ff)$ for any word $g \in G_F$ in terms of products of the matrices $M_i$. For example, consider $g = F_1F_2^2F_3F_1$, then
\[
\sigma(g) = M_1 M_2^2 M_3 M_1
\]
The above relation shows that $\sigma$ is a homomorphism from $G$ to $GL(N_G,\ff)$. The dimension $N_G$ of the subspace $W$, which is the smallest subspace invariant to all maps $\mckf_i$ and containing the coordinate functions $\chi_i$ is defined as the \textit{linear complexity} of the group $G_F$.

\section{Conclusion}
The paper primarily addresses the problem of linear representation, invertibility, and construction of the compositional inverse for non-linear maps over finite fields. Though there is vast literature available for invertibility of polynomials and construction of inverses of permutation polynomials over $\ff$, this paper explores a completely new approach using the Koopman operator defined by the iterates of the map. This helps define the linear representation of non-linear maps, which translates non-linear compositions of the map to matrix multiplications. This linear representation naturally defines a notion of linear complexity for non-linear maps, which can be viewed as a measure of computational complexity associated with computations involving such maps. The framework of linear representation is then extended to parameter dependent maps over $\ff$, and the conditions on parametric invertibility of such maps are established, leading to a construction of the parametric inverse map (under composition). It is shown that the framework can be extended to multivariate maps over $\ff^n$, and the conditions are established for invertibility of such maps, and the inverse is constructed using the linear representation. Further, the problem of linear representation of the group generated by a finite set of permutation maps over $\ff^n$ under composition is also solved by extending the theory of linear representation of a single map. This leads to the notion of complexity of a group of permutation maps under composition. 

\subsection*{Acknowledgements}
The first author would like to thank the Department of Electrical Engineering, Indian Institute of Technology - Bombay, as the work was done in full during his tenure as a Institue Post-Doctoral Fellow. The authors would also like to thank the reviewers for their suggestions in the proofs of Lemma 1, Proposition 1 and Lemma 3.  

\bibliographystyle{elsarticle-num-names}
\bibliography{mybib.bib}
\end{document}